\definecolor{blue_links}{RGB}{13,0,180} 
\definecolor{lightblue}{rgb}{0.9,0.9,1}
\renewcommand{\textbf}[1]{\begingroup\bfseries\mathversion{bold}#1\endgroup}
\newtheorem{thm}{Theorem}[section]
\newtheorem{corollary}[thm]{Corollary}
\newtheorem{proposition}[thm]{Proposition}
\theoremstyle{definition}
\newtheorem{remark}[thm]{Remark}
\newcommand{\argmin}{\mathop{\rm argmin}\nolimits}
\definecolor{orange-red}{rgb}{1.0, 0.27, 0.0}
\newcommand{\R}{\mathbb R}
\newcommand{\Z}{\mathbb Z}
\numberwithin{equation}{section}
\def\XXint#1#2#3{{\setbox0=\hbox{$#1{#2#3}{\int}$}
    \vcenter{\hbox{$#2#3$}}\kern-.5\wd0}}
\newcommand{\EEE}{\color{black}}
\newcommand{\ff}{\rho}
\begin{document}
\title{Lattice ground states for embedded-atom models in 2D and 3D}

\author{Laurent B\'{e}termin}
\address[Laurent B\'{e}termin]{Faculty of Mathematics, University of
  Vienna, 
Oskar-Morgenstern-Platz 1, 1090 Vienna, Austria.}
\email{laurent.betermin@univie.ac.at}
\urladdr{https://sites.google.com/site/homepagelaurentbetermin/}
 
 \author{Manuel Friedrich}
\address[Manuel Friedrich]{Applied Mathematics,  
University of M\"{u}nster, 
 Einsteinstr. 62, D-48149 M\"{u}nster, Germany.}
\email{manuel.friedrich@uni-muenster.de}
\urladdr{https://www.uni-muenster.de/AMM/en/Friedrich/}

 \author{Ulisse Stefanelli}
\address[Ulisse Stefanelli]{Faculty of Mathematics, University of Vienna,
Oskar-Morgenstern-Platz 1, 1090 Vienna, Austria,
 Vienna Research Platform on Accelerating Photoreaction Discovery,
 University of Vienna, W\"ahringerstra\ss e 17, 1090 Wien, Austria, and
 Istituto di Matematica Applicata e Tecnologie Informatiche {\it E. Magenes} - CNR
 via Ferrata 1, 27100 Pavia, Italy. }
\email{ulisse.stefanelli@univie.ac.at}
\urladdr{http://www.mat.univie.ac.at/$\sim$stefanelli}

\begin{abstract}
The Embedded-Atom Model (EAM) provides a phenomenological
description of
atomic arrangements in metallic systems. It consists of a
configurational energy depending on atomic positions and featuring the interplay of two-body atomic interactions
and nonlocal effects due to the corresponding electronic clouds. The
purpose of this paper is to mathematically investigate the
minimization of 
the EAM energy among   lattices  in two and three dimensions. We
present a suite of analytical and numerical results under 
different reference choices for the underlying interaction
potentials. In particular, Gaussian, inverse-power, and
Lennard-Jones-type interactions are addressed. 
\end{abstract}

\subjclass[2010]{70G75, 
  74G65, 74N05}

\keywords{Embedded-atom model, lattice energy minimization, Epstein zeta function.}

\maketitle

\section{Introduction}

Understanding the structure of matter is a central scientific and
technological quest, cutting
across disciplines and motivating an ever increasing computational effort.
First-principles calculations deliver accurate predictions but are
often impeded by the inherent quantum complexity, as systems size up \cite{CondensMatter}. One is hence
 led  to consider a range of approximations. The minimization of
empirical atomic pair-potentials represents the simplest of such
approximations being able to describe specific 
properties of large-scaled atomic systems. Still, atomic
pair-interactions fall short of describing the basic nature of
metallic bonding, which is multi-body by nature, and often deliver 
inaccurate  predictions  of metallic systems.

The {\it Embedded-Atom Model} (EAM) is a semi-empirical, many-atom
potential aiming at describing the atomic structure of metallic
systems by including a nonlocal electronic correction. Introduced by
Daw and Baskes \cite{DawBaskes83}, it has been used to  address  efficiently different
aspects inherent to atomic arrangements   including defects, dislocations, fracture, grain boundary structure
and energy, surface structure, and epitaxial growth. Proving
capable of reproducing experimental observations and being relatively
simple to implement, the Embedded-Atom Model  is now routinely used in molecular dynamic
simulations \cite{DawFoilesBaskeReviewEAM,LeSar}.  In particular,
it has been applied in a variety of metallic systems \cite{Foiles}, including alkali
metals Li, Na, K \cite{DorrellPartay,JohnsonOh,YuanChenShen},
transition metals Fe, Ni, Cu, Pd, Ag, Pt, Au \cite{CaiYe,GrocholaGold,JohnsonOh,LeSar},
post-transition metals Al, Pb \cite{CaiYe,JaffeAl,SuttonChen}, the metalloid Si
\cite{BaskesSi}, and some of their alloys \cite{CaiYe,Johnson}.\EEE 

 In the case of a metallic
system with a single atomic species, the
{\it EAM energy} is specified as
$$\sum_{i} F(\overline \ff_i) +
\sum_{i\not =j} \phi(|x_i - x_j|)\quad \text{with}\quad \overline
\ff_i =\sum_{j\not =i}  \ff  (|x_i - x_j|).$$
Here, $\{x_i\}$ indicate atomic positions in ${\mathbb R}^d$ and the
 long-range 
{\it interaction potential} $\phi\colon\R_+:=(0,\infty)\to \R_+$ modulates atomic
pair-interactions. Atomic positions induce electronic-cloud
distributions. The function $ \ff \colon \R_+ \to \R_+$ models the long-range {\it electron-cloud contribution} of an atom placed at $x_j$ on an atom placed at
$x_i$.  The sum  $\overline \ff_i$
describes the  cumulative  effect on the atom placed at
$x_i$
of the electronic clouds  related to all other atoms.  Eventually,
the function $F\colon\R_+ \to\R_+$ describes the energy needed
to place (embed) an atom at position $x_i$ in the host electron gas
created  by the other atoms at positions  $\lbrace x_j \rbrace$.  

Purely pair-interaction potentials can be re-obtained from the EAM
model by choosing
$F=0$ and have been the subject of intense mathematical research under
different choices for $\phi$. The reader is referred to \cite{BlancLewin-2015}
for a survey on the available mathematical results. The setting $F=0$
corresponds  indeed to
the so-called Born-Oppenheimer approximation \cite{CondensMatter},
which is well adapted to the case of
very low temperatures and is based on the subsequent solution of the
electronic and the atomic problem. As mentioned, this approximation turns out to be
not always appropriate for metallic systems at finite temperatures
\cite{SuttonChen,LBMorse} and one is asked to tame the 
quantum nature of the problem.  This is however very
challenging from the mathematical viewpoint and rigorous optimality results for
point configurations in the quantum setting are scarce  \cite{BlancLebris,Betermin:2014fy}.
The EAM model represents hence an intermediate model between
zero-temperature phenomenological pair-interaction energies and
quantum systems. Electronic effects are still determined by atomic
positions, but in a more realistic nonlocal fashion  when $F$ is nonlinear, \EEE resulting in truly multi-body
interaction  systems, see \cite{DawBaskes83,FinnisSinclair,SuttonChen}
and \cite{DawFoilesBaskeReviewEAM} for a review.

The aim of this paper is to investigate point configurations
minimizing the EAM
energy. Being interested in periodic arrangements, we restrict our
analysis to the class of  {\it lattices},  namely 
infinite 
configurations of the form $L= \oplus_{i=1}^d \Z u_i$ where
 $\{u_i\}_{i=1}^d$  is a basis of $\R^d$. This reduces the optimality problem to finite
dimensions, making it analytically and numerically amenable. In
particular, the EAM energy-per-atom of the lattice $L$ takes the
specific form 
$$\mathcal E[L] =  F \Big(\sum_{q\in L\setminus \{0\}} \rho(|q|) \Big) + \sum_{q\in
  L\setminus \{0\}} \phi(|q|).  $$ 

In the classical pair-interaction case $F=0$, the  lattice  energy $\mathcal E$ has already received attention and a variety of
results are available, see
\cite{Mont,SarStromb,CohnKumar,BetTheta15,OptinonCM,BDefects20} and
the references therein. Such results are of course dependent on the
choice of the potential $\phi$. Three reference choices for $\phi$ are 
the {\it Gaussian} 
$\phi(r) = e^{-\pi\delta r^2}$ for $\delta>0$, the {\it inverse-power law}
$\phi(r) = r^{-s} $ for $s>d$, and the {\it
  Lennard-Jones-type} form $\phi(r) = ar^{-\alpha} - b r^{-\beta}$ for
$d<\beta<\alpha$ and $a,\,b >0$.  In the Gaussian case, it has been
shown by Montgomery \cite{Mont} that, for all $\delta>0$, the
triangular lattice   of unit density  is the unique minimizer (up to
isometries) of $\mathcal E$ with $F=0$  among unit-density
lattices. The same can be checked for the the inverse-power-law case by a
Mellin-transform argument. More generally, the minimality of
the
triangular lattice   of unit density is conjectured by Cohn and Kumar in
\cite[Conjecture 9.4]{CohnKumar} to hold among all unit-density periodic
configurations. This fact is  called \textit{universal optimality} and
has been recently proved in dimension $8$ and $24$ 
for the lattice $\mathsf{E}_8$ and the Leech lattice $\Lambda_{24}$,
respectively \cite{CKMRV2Theta}.
In the
Lennard-Jones case, the minimality  in 2d  of the triangular lattice at fixed
density has been investigated in \cite{Betermin:2014fy,BetTheta15},
the  minimality  in  3d  of the  cubic lattice is
proved  in
\cite{Beterminlocal3d}, and more general properties in arbitrary
dimensions have been investigated  in \cite{OptinonCM}. A recap of the main
properties of the  Lennard-Jones case  is  presented in
 Subsection  \ref{sec:LJ}.  These play a relevant  role in our analysis.

In this paper, we focus on the general case $F\not=0$,  when $F$ is nonlinear. \EEE More precisely,
we discuss the reference cases of embedding functions $F$ of the form
$$F(r) = r^t \log (\gamma r) \quad \text{or} \quad F(r) = r^t$$
for $t, \, \gamma >0$. The first, logarithmic choice is the classical
one chosen to fit with the so-called Universal Binding Curve (see e.g., \cite{Rose}) and \EEE favoring a specific minimizing value $r_0>0$, see
\cite{BanerjeaSmithUniv,CaiYe}. The second, power-law form favors on
the contrary $r_0=0$ and allows for a particularly effective
computational approach. 
Let us mention that other choices for $F$ could be of interest. In
particular, the form  $F(r)=-c\sqrt{r}$, $c>0$,
 is related to the {\it Finnis-Sinclair} model \cite{FinnisSinclair}
 and is discussed in Remark \ref{rmk:Finnis}. Some of our theory holds
 for general functions $F$, provided that they are minimized at a sole
 value $r_0$. We call such functions of {\it one-well} type.

The electronic-cloud contribution function $ \ff \colon \R_+ \to
\R_+$ is assumed to be decreasing and integrable. We specifically focus
on the Gaussian and inverse-power law 
$$  \ff  (r) = e^{-\delta r^2} \quad \text{or} \quad  \ff  (r) = r^{-s}$$
for $\delta>0$ and $s>d$, discussed, e.g., in  \cite{ZhangHuJiang19}  and
\cite{DawBaskes83,FinnisSinclair,SuttonChen}, respectively.

As for the pair-interaction potential $\phi\colon\R_+ \to
\R_+$, we assume a Lennard-Jones-type form
\cite{BaskesLJEAM,SrinivasanBaskesLJEAM} or an inverse-power law
 \cite{DawBaskes83,SuttonChen},  i.e., 
$$ \phi(r) = ar^{-\alpha} - b r^{-\beta} \quad \text{or} \quad \phi(r) = r^{-\alpha}$$
for $d<\beta<\alpha$ and $a,\,b>0$. Note that short-ranged potentials
$\phi$ have been considered  as well \cite{DawBaskes83,DawBaskes84}.

Our main theoretical results amount at identifying  minimizers in the
specific reference case of $F(r)=r \log r$ and $ \ff 
(r)=r^{-s}$.  More precisely, we find the following:
\begin{itemize}
\item (Inverse-power law) If $\phi(r)=r^{-\alpha}$, the minimizers of $\mathcal E$ coincide with
  those of the Lennard-Jones potential $r\mapsto r^{-\alpha} -
  r^{-s}$, up to rescaling (Theorem \ref{thmLJGeneral}); \\ 
\item (Lennard-Jones)  If  $\phi(r) = ar^{-\alpha} - b r^{-\beta}$,  under some compatibility  assumptions  on the
  parameters, the minimizers of $\mathcal E$ coincide with
  those of the Lennard-Jones potential $r\mapsto r^{-\alpha} -
  r^{-s}$  (Theorem~\ref{thmLJEAM2}). 
\end{itemize}
Actually, both results hold  for more general embedding
functions $F$, see \eqref{eq: g}  and Remarks \ref{rmk:applyrlogr}--\ref{rem:4}. With this at hand,  the problem can
be reduced to the pure Lennard-Jones case (i.e., $F =0$) which is
already well understood.    In particular, in the  two
dimensional  case we find that the   triangular lattice, up to rescaling  and isometries, is  the unique minimizer of $\mathcal E$  in  specific parameters regimes.  
These theoretical findings are illustrated by numerical
experiments in two and three dimensions. By alternatively choosing the
Gaussian $ \ff (r)=e^{-\delta r^2}$, in two dimensions we
additionally observe the onset of a phase transition between the
triangular and an orthorhombic lattice, as $\delta$ decreases. In
three dimensions, both in the inverse-power-law case $ \ff  (r) = r^{-s}$ and
in the Gaussian case $ \ff  (r)=e^{-\delta r^2}$, the simple cubic lattice
$\Z^3$ is favored against the  face-centered and the body-centered cubic lattice for $s$ or
$\delta$ small, respectively.

In the power-law case $F(r) = r^t$, for $ \ff $  of
inverse-power-law type  and
$\phi$ of Lennard-Jones type and specific, physically relevant choices of parameters, one
can conveniently reduce the complexity of the optimization
problem from the analytical standpoint. This reduction allows to
explicitly compute the EAM energy for any lattice of unit density,
hence allowing to investigate numerically  minimality in two and three
dimensions. Depending on the parameters, the relative minimality of
the triangular, square, and orthorhombic lattices in two dimensions and
the simple cubic, body-centered cubic, and face-centered cubic
lattices in three dimension is ascertained.

This is the plan of the paper: Notation on potentials and energies are
introduced in Subsections \ref{sec:Lattices} and
\ref{sec:PotEnergy}. The two subcases $F=0$ and $\phi=0$ are discussed
in Subsection \ref{sec:LJ} and  in Section~\ref{sec:EFEmbedding},
respectively. In particular, known results on Lennard-Jones-type interactions are
recalled in Subsection \ref{sec:LJ}. The inverse-power-law case
$\phi(r)=r^{-\alpha}$ is
investigated in Section \ref{sec:TheoricIP}. The Lennard-Jones
case  $\phi(r) = ar^{-\alpha} - b r^{-\beta}$ is addressed
theoretically and numerically in Section
\ref{sec:LJEAM}. In particular, Subsection \ref{sec:LJEAMclassic} contains the classical case $F(r) = r\log r$, and   Subsection \ref{sec:LJEAMPowerLaw} discusses the
power-law case $F(r)=r^t$.

\section{Notation and preliminaries}

\subsection{Lattices}\label{sec:Lattices}
 For any dimension $d$,  we write $\mathcal{L}_d$ for the set of all lattices $L=\bigoplus_{i=1}^d \Z u_i$, where $\{u_i\}_{i=1}^d$ is a basis of $\R^d$. We write $\mathcal{L}_d(1)\subset \mathcal{L}_d$ for the set of all lattices with unit density, which  corresponds to $|\det (u_1,\ldots,u_d)|= 1$. 

In dimension  two,  any lattice  $ L \in  \mathcal{L}_2(1)$ can be  written as 
\begin{equation*}
L:=\Z\left(\frac{1}{\sqrt{y}},0  \right)\oplus \Z\left( \frac{x}{\sqrt{y}},\sqrt{y}\right),
\end{equation*}
for $(x,y) \in \mathcal{D}$, where
\begin{equation}\label{eq:D}
\mathcal{D}=\big\{(x,y)\in \R^2 \, : \,  0\leq x \leq 1/2, \,  y>0; \,  x^2+y^2\geq 1\big\}
\end{equation}
is  the so-called {\it (half) fundamental} domain for $\mathcal{L}_2(1)$ (see, e.g., \cite[Page 76]{Mont}). In particular, the square lattice $\Z^2$ and the triangular lattice  with  unit density, denoted by  $\mathsf{A}_2 \in \mathcal{L}_d(1)$, are given by the respective choices $(x,y)=(0,1)$ and  $(x,y)=\left( 1/2,{\sqrt{3}}/{2}\right)$, i.e.,
\begin{equation*}
 \Z^2 =  \Z(1,0) \oplus \Z(0,1) \quad \text{ and } \quad  \mathsf{A}_2:=\sqrt{\frac{2}{\sqrt{3}}}\left[ \Z(1,0)\oplus \Z\left(\frac{1}{2},\frac{\sqrt{3}}{2} \right) \right].
\end{equation*}

In dimension three, the fundamental domain of $\mathcal{L}_3(1)$ is much more difficult to describe (see e.g., \cite[Section 1.4.3]{Terras_1988}) and its 5-dimensional nature makes it impossible to plot compared to the 2-dimensional  $\mathcal{D}$ defined in \eqref{eq:D}. The Face-Centered Cubic (FCC) and Body-Centered
Cubic (BCC) lattices with unit density are respectively indicated
by $\mathsf{D}_3\in \mathcal{L}_3(1)$ and $\mathsf{D}_3^*\in
\mathcal{L}_3(1)$, and are defined as 
\begin{align*}
 &\mathsf{D}_3:=2^{-\frac{1}{3}}\left[\Z(1,0,1)\oplus \Z(0,1,1)\oplus \Z(1,1,0)  \right];\\
 &\mathsf{D}_3^*:=2^{\frac{1}{3}}\left[\Z(1,0,0)\oplus \Z(0,1,0)\oplus \Z\left(\frac{1}{2},\frac{1}{2},\frac{1}{2}  \right)  \right]. 
\end{align*}

{\begin{remark}[Periodic configurations]
 All  results   in  this paper are stated in terms of
lattices, for the sake of definiteness. Let us however point out that the same statements hold
in the more general setting of periodic configurations in
dimensions $d\in \{8,24\}$,  on the basis of the recently proved
 optimality results from  \cite{CKMRV2Theta}. In dimension
$d=2$,  universal optimality is only known  among lattices,
see \cite{Mont}. Still, the validity of the Cohn-Kumar
conjecture (see \cite[Conjecture 9.4]{CohnKumar})  would allow us to
consider more general periodic configurations as well.  
\end{remark}

\subsection{Potentials and energies}\label{sec:PotEnergy}
For any dimension $d$, let $\mathcal{S}_d$ be the set of
all functions $ f  \colon\R_+\to \R$ such that $| f  (r)|={\rm
  O}(r^{-d-\eta})$ for some $\eta>0$ as $r\to \infty$.  By $\mathcal{S}_d^+ \subset \mathcal{S}_d$ we denote the subset of nonnegative functions. We say that  a
continuous function  $F\colon\R_+\to \R$ is a {\it one-well potential}
if there exists $r_0>0$ such that $F$ is  decreasing on $(0, r_0)$ and
increasing on $(r_0,\infty)$.

For any $\phi\in \mathcal{S}_d$, we define the {\it interaction energy} $E_\phi\colon\mathcal{L}_d\to \R$ by
\begin{equation}
E_\phi[L]:=\sum_{q\in L\backslash \{0\}} \phi(|q|).\label{fi}
\end{equation}
If $\phi(r)=r^{-s}$, $s>d$, $E_\phi[L]$ actually corresponds to
 the {\it Epstein zeta function}, which  is defined by
\begin{align}\label{eq: zeta}
\zeta_L(s):=\sum_{q\in L\backslash \{0\}} \frac{1}{|q|^s}.
\end{align}
For any function $F\colon  \R_+  \to \R$ and  for  any  $ \ff \in
\mathcal{S}_d^+$,  we  define the {\it embedding energy}  $E_{F,\ff}\colon\mathcal{L}_d\to \R$ by 
\begin{align}\label{eq: embedding -energy}
E_{F,\ff}[L]:=F(E_{ \ff }[L]) \quad \text{with} \quad
  E_{ \ff }[L]:=\sum_{q\in L\backslash \{0\}} \ff(|q|). 
\end{align}
Finally, for any $\phi\in \mathcal{S}_d$,  any $\ff \in \mathcal{S}_d^+$,  and any $F\colon\R_+  \to \R$, we define the  \emph{total energy}  $\mathcal{E}\colon\mathcal{L}_d\to \R$ by
\begin{align}\label{eq: main eenergy}
\mathcal{E}[L]:=E_{F,\ff}[L]+ E_\phi[L]=F(E_{ \ff }[L])+E_\phi[L].
\end{align}

 In the following, we investigate $\mathcal E$ under different
choices of the  potentials $F$, $ \ff $, and $\phi$.  In some parts, we will require merely abstract conditions on the potentials, such as a monotone decreasing $ \ff $ or a one-well potential $F$. In other parts, we will consider more specific potentials. In particular,  we will choose, for
$\gamma, \delta, t,a,b>0$, $s>d$, and $\alpha>\beta>d$,
$$
F(r)\in \{r^t,    r^{t} \log(\gamma r) \},\quad  \ff  (r)\in \{ r^{-s}, e^{-\delta r^2}\},\quad \phi(r)\in \{ r^{-\alpha}, a r^{-\alpha}- b r^{-\beta}\}.
$$
Note that the choice of $s$, $\delta$, $\alpha$, and $\beta$ implies
that   $\phi \in \mathcal{S}_d$  and $\ff \in \mathcal{S}_d^+$,  so that the sums
in \eqref{fi} and \eqref{eq: embedding -energy} are well defined.

For any $L\in \mathcal{L}_d(1)$, any $\phi\in \mathcal{S}_d$,   any $\ff \in \mathcal{S}_d^+$, 
and any $F\colon \R_+  \to \R$, we define, if they uniquely exist, the following optimal scaling parameters for the energies:
\begin{align}\label{eq:optimal lambda}
\lambda^{\mathcal{E}}_L  :=\argmin_{\lambda>0} \mathcal{E}[\lambda L],\quad  \lambda^{F,\ff}_L  :=\argmin_{\lambda>0} E_{F,\ff} [\lambda L],\quad  \lambda^\phi_L  :=\argmin_{\lambda>0} E_\phi[\lambda L].
\end{align}



\subsection{A recap on the Lennard-Jones-type energy}\label{sec:LJ}

A classical problem is to study the $F=0$ case for a  Lennard-Jones-type  potential
\begin{equation}\label{eq:LJ}
\phi(r)=a r^{-\alpha} -b r^{-\beta},\quad \alpha>\beta>d,\quad a,\,b>0.
\end{equation}
 Let us recap some known facts in this case
\cite{BetTheta15,OptinonCM}, which will be used later on.  We start by reducing the minimization problem on \emph{all} lattices  to  
a minimization problem  on lattices of \emph{unit density only}.  This is achieved by computing the  optimal scaling parameter of the energy $\lambda^\phi_L$ , see \eqref{eq:optimal lambda},  for each $L \in
\mathcal{L}_d(1)$, which in turn allows to find the minimum of the energy among dilations of $L$. More precisely, in case
\eqref{eq:LJ},  for all $\lambda>0$ and all  lattices  $L \in
\mathcal{L}_d(1)$, 
one has 
$$
E_\phi[\lambda L]=a\lambda^{-\alpha} \zeta_L(\alpha)-b\lambda^{-\beta} \zeta_L(\beta),
$$
where we use \eqref{eq: zeta}.  (This  energy was studied first in
\cite[Section 6.3]{BetTheta15}.) Then, we find the unique minimizer 
\begin{align}\label{eq: lambda2-def}
\lambda^\phi_L=  \left(\frac{\alpha a\zeta_L(\alpha)}{\beta b\zeta_L(\beta)}  \right)^{\frac{1}{\alpha-\beta}},
\end{align}
and therefore the energy is given by 
$$
\min_{\lambda>0} E_\phi[\lambda L] =  E_\phi[ \lambda^\phi_L L]=\frac{b^{\frac{\alpha}{\alpha-\beta}}\zeta_L(\beta)^{\frac{\alpha}{\alpha-\beta}}}{a^\frac{\beta}{\alpha-\beta}\zeta_L(\alpha)^\frac{\beta}{\alpha-\beta}}\left( \left(\frac{\beta}{\alpha}  \right)^{\frac{\alpha}{\alpha-\beta}}-\left(\frac{\beta}{\alpha}  \right)^{\frac{\beta}{\alpha-\beta}} \right)<0.
$$
The latter inequality follows from the fact that $\alpha > \beta$.  Consequently,  for any lattices $L,\Lambda\in \mathcal{L}_d(1)$, we have that
$$
E_\phi[\lambda_L^\phi L]\leq E_\phi[ \lambda^\phi_\Lambda \Lambda]\iff \frac{\zeta_L(\alpha)^\beta}{\zeta_L(\beta)^\alpha}\leq \frac{\zeta_\Lambda(\alpha)^\beta}{\zeta_\Lambda(\beta)^\alpha}.
$$
 This means that finding the lattice with minimal energy amounts to minimizing the function 
\begin{align}\label{eq: e*}
L \mapsto e^*(L) := \frac{\zeta_L(\alpha)^\beta}{\zeta_L(\beta)^\alpha}
\end{align}
 on  $\mathcal{L}_d(1)$. This is particularly effective in dimension two where for fixed
$(\alpha,\beta)$ the minimizer can be found numerically by plotting $L\mapsto
\min_\lambda \mathcal{E}[\lambda L]$ in the fundamental domain
$\mathcal{D}$.  Figure \ref{fig:LJ126} shows the  case 
$(\alpha,\beta)=(12,6)$, i.e., when $\phi$ is the classical
Lennard-Jones potential. The global minimum of $E_\phi$ in
$\mathcal{L}_2$ appears to be the triangular lattice
 $\lambda^\phi_{\mathsf{A}_2} \mathsf{A}_2$. 

For a certain range of parameters $(\alpha,\beta)$, this observation
can be rigorously ascertained.   
Indeed, for $d=2$, it is shown   in \cite[Theorem 1.2.B.]{BetTheta15}
that the global minimum of $E_\phi$ is uniquely achieved   by   a triangular lattice $\lambda_{\mathsf{A}_2}^\phi\mathsf{A}_2$ if  
\begin{equation}\label{eq:h}
 H  (\alpha)<  H (\beta),\quad \textnormal{where}\quad  H(t):=\frac{1}{2}\pi^{-t/2}\Gamma\left( \frac{t}{2}\right)t,
\end{equation}
 and $\Gamma$ is the classical {\it Gamma} function $\Gamma(r) =
\int_0^\infty x^{r-1}e^x \, {\rm d} x$ for $r>0$.  (In the sequel, all statements on uniqueness are intended up to isometries, without further notice.)   In fact, under
condition \eqref{eq:h} one has that \cite{BetTheta15} 
\begin{itemize}
\item $\mathsf{A}_2$ is the unique minimizer in $\mathcal{L}_2(1)$ of $\displaystyle L\mapsto  \lambda^\phi_L  =  \left(\frac{\alpha a\zeta_L(\alpha)}{\beta b\zeta_L(\beta)}  \right)^{\frac{1}{\alpha-\beta}}$,
\item $\mathsf{A}_2$ is the unique minimizer in $\mathcal{L}_2(1)$  of $e^*$ defined in \eqref{eq: e*}.
\end{itemize}
As pointed out in \cite[Remark  6.18]{BetTheta15}, it is necessary to choose
   $2<\beta<\alpha< M\approx 9.2045818$  in order
  to obtain  these optimality results by using the method
  developed there. In particular, this means that the
  following pairs of integer exponents can be chosen:
  $(\alpha,\beta)\in \{(4,3) ; (5,3) ; (6,3) ; (5,4) ; (6,4)
  \}$.  Note that  the classical Lennard-Jones potential $(\alpha,\beta)=(12,6)$ is not covered by \cite[Theorem 1.2.B.]{BetTheta15}.

\begin{figure}[H]
\begin{center}
	\includegraphics[width=10cm]{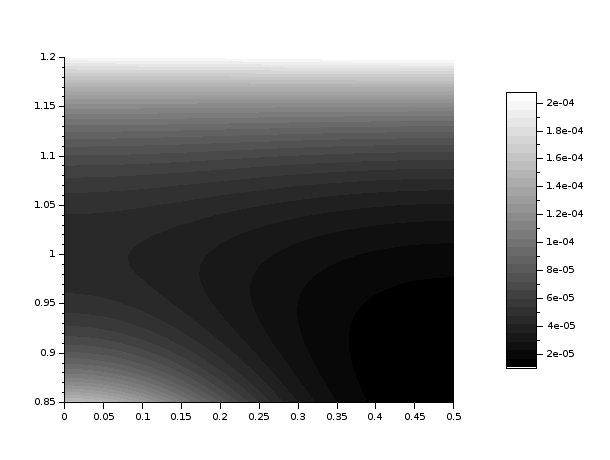}
	\end{center}
	\caption{Contour plot of $L\mapsto  e^*(L)  =
          \frac{\zeta_L(12)^{6}}{\zeta_L(6)^{12}}$ in the fundamental
          domain    $\mathcal D$.  The triangular lattice $\mathsf{A}_2$ with coordinates $(1/2,\sqrt{3}/2)$  appears to be the unique minimizer. Moreover, $\Z^2$ with coordinates $(0,1)$ appears to be a saddle point.}
	\label{fig:LJ126}
      \end{figure}

 We now ask ourselves
what is the minimal scaling parameter $\lambda$ and the
corresponding lattice  $L \in \mathcal{L}_d(1)$  for which $E_{\phi}[\lambda L]$ is minimized.
Physically, this would correspond to identifying the first 
minimum of $E_\phi$ starting from a high-density
configuration by progressively decreasing the density.  We have the following.

\begin{proposition}[Smallest volume meeting the global minimum]\label{lem:LJminlambda}
 Let $\phi$ be a Lennard-Jones-type potential  as in
 \eqref{eq:LJ}. If  $L_d \in \mathcal{L}_d(1)$ is the minimizer of
 $L\mapsto \zeta_L(\beta)$ on $\mathcal{L}_d(1)$ and
 $\lambda_{L_d}^\phi L_d$ is  the unique global minimizer of $E_\phi$ on $\mathcal{L}_d$, then $\lambda_{L_d}^\phi$ is the unique minimizer of $L\mapsto  \lambda^\phi_L$ on $\mathcal{L}_d(1)$.
 \end{proposition}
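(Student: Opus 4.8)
The plan is to reduce the minimization of $L\mapsto\lambda^\phi_L$ to the two auxiliary minimization problems controlled by the hypotheses, namely those for $e^*$ from \eqref{eq: e*} and for $L\mapsto\zeta_L(\beta)$, and then to combine them through an elementary algebraic identity. First I would record, from \eqref{eq: lambda2-def}, that for every $L\in\mathcal{L}_d(1)$
\[
\lambda^\phi_L=\left(\frac{\alpha a}{\beta b}\right)^{\frac{1}{\alpha-\beta}}\left(\frac{\zeta_L(\alpha)}{\zeta_L(\beta)}\right)^{\frac{1}{\alpha-\beta}},
\]
so that, since $\alpha-\beta>0$ makes $x\mapsto x^{1/(\alpha-\beta)}$ strictly increasing on $\R_+$, the maps $L\mapsto\lambda^\phi_L$ and $L\mapsto\zeta_L(\alpha)/\zeta_L(\beta)$ share exactly the same minimizers on $\mathcal{L}_d(1)$; hence it suffices to show that $L\mapsto\zeta_L(\alpha)/\zeta_L(\beta)$ is uniquely minimized at $L_d$, with value $\zeta_{L_d}(\alpha)/\zeta_{L_d}(\beta)$. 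The identity I would exploit is
\[
\left(\frac{\zeta_L(\alpha)}{\zeta_L(\beta)}\right)^{\beta}=\frac{\zeta_L(\alpha)^{\beta}}{\zeta_L(\beta)^{\alpha}}\,\zeta_L(\beta)^{\alpha-\beta}=e^*(L)\,\zeta_L(\beta)^{\alpha-\beta},
\]
valid for all $L\in\mathcal{L}_d(1)$.

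Next I would translate the hypotheses. By the computation preceding \eqref{eq: e*}, for $L,\Lambda\in\mathcal{L}_d(1)$ one has $\min_{\lambda>0}E_\phi[\lambda L]=E_\phi[\lambda^\phi_L L]$ and $E_\phi[\lambda^\phi_L L]\le E_\phi[\lambda^\phi_\Lambda\Lambda]\iff e^*(L)\le e^*(\Lambda)$; since every lattice is a dilation of a unit-density one and isometries preserve the density, the assumption that $\lambda^\phi_{L_d}L_d$ is the unique global minimizer of $E_\phi$ on $\mathcal{L}_d$ is equivalent to $L_d$ being the unique minimizer of $e^*$ on $\mathcal{L}_d(1)$. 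The remaining hypothesis states that $L_d$ minimizes $L\mapsto\zeta_L(\beta)$ on $\mathcal{L}_d(1)$, and since $\alpha-\beta>0$ the map $x\mapsto x^{\alpha-\beta}$ preserves the order on $\R_+$. Thus, for any $L\in\mathcal{L}_d(1)$ not isometric to $L_d$, both factors in the identity are positive, $e^*(L)>e^*(L_d)$ strictly, and $\zeta_L(\beta)^{\alpha-\beta}\ge\zeta_{L_d}(\beta)^{\alpha-\beta}>0$, so multiplying gives $e^*(L)\,\zeta_L(\beta)^{\alpha-\beta}>e^*(L_d)\,\zeta_{L_d}(\beta)^{\alpha-\beta}$.

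Finally, feeding this back into the identity gives $(\zeta_L(\alpha)/\zeta_L(\beta))^{\beta}>(\zeta_{L_d}(\alpha)/\zeta_{L_d}(\beta))^{\beta}$, and since $x\mapsto x^{\beta}$ is strictly increasing on $\R_+$ (as $\beta>0$) I would conclude $\zeta_L(\alpha)/\zeta_L(\beta)>\zeta_{L_d}(\alpha)/\zeta_{L_d}(\beta)$, hence that $L\mapsto\lambda^\phi_L$ is uniquely minimized at $L_d$ with value $\lambda^\phi_{L_d}$. The argument is mostly bookkeeping; the step I expect to require the most care is the passage from "$\lambda^\phi_{L_d}L_d$ is the unique global minimizer of $E_\phi$ on $\mathcal{L}_d$" to "$L_d$ is the unique minimizer of $e^*$ on $\mathcal{L}_d(1)$" --- since a global minimizer of $E_\phi$ need not have unit density, this needs the optimal-scaling reduction together with the invariance of density under isometries --- and the correct propagation of strictness through the product, which works precisely because the $e^*$-factor is already strictly larger.
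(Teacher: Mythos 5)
Your proof is correct and follows essentially the same route as the paper's: you reduce the hypothesis on the global minimizer of $E_\phi$ to unique minimality of $e^*$ on $\mathcal{L}_d(1)$, and your identity $\bigl(\zeta_L(\alpha)/\zeta_L(\beta)\bigr)^{\beta}=e^*(L)\,\zeta_L(\beta)^{\alpha-\beta}$ is precisely the paper's inequality \eqref{eq:ineqzeta1} written multiplicatively for a single lattice instead of as a ratio between $L$ and $L_d$. The only (cosmetic) difference is that you obtain uniqueness directly by propagating the strict inequality $e^*(L)>e^*(L_d)$ through the product, whereas the paper argues the equality case separately; both are fine.
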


 \begin{proof}
As discussed above, if $\lambda_{L_d}^\phi L_d$ is a global minimizer of $E_\phi$ on $\mathcal{L}_d$,  then $L_d$ minimizes the function $e^*$  defined in \eqref{eq: e*} on $\mathcal{L}_d(1)$. This yields 
\begin{align}\label{eq:NNN}
\frac{\zeta_{L_d}(\alpha)^\beta}{\zeta_{L_d}(\beta)^\alpha} \leq \frac{\zeta_L(\alpha)^\beta}{\zeta_L(\beta)^\alpha}  
\end{align}
 for all $L\in \mathcal{L}_d(1)$. We thus have  
$$ \left( \frac{\zeta_L(\beta) \zeta_{L_d}(\alpha)}{\zeta_L(\alpha) \zeta_{L_d}(\beta)} \right)^\beta    = \frac{\zeta_L(\beta)^\alpha\zeta_{L_d}(\alpha)^\beta}{\zeta_L(\alpha)^\beta\zeta_{L_d}(\beta)^\alpha} \left( \frac{\zeta_{L_d}(\beta)}{\zeta_{L}(\beta)} \right)^{\alpha-\beta}      \leq \left( \frac{\zeta_{L_d}(\beta)}{\zeta_{L}(\beta)} \right)^{\alpha-\beta}.$$
As we are assuming  that  
$\zeta_{L_d}(\beta)\leq \zeta_L(\beta)$ for all $L\in \mathcal{L}_d(1)$,  we further get 
\begin{equation}\label{eq:ineqzeta1}
\frac{\zeta_L(\beta) \zeta_{L_d}(\alpha)}{\zeta_L(\alpha) \zeta_{L_d}(\beta)} \leq \left( \frac{\zeta_{L_d}(\beta)}{\zeta_{L}(\beta)} \right)^{\frac{\alpha-\beta}{\beta}}\leq 1,
\end{equation}
 where we use that $\alpha > \beta$. In view of \eqref{eq:
   lambda2-def},  this shows that  $\lambda^\phi_L \geq \lambda_{L_d}^\phi$ for all $L\in \mathcal{L}_d(1)$. If $\lambda^\phi_L = \lambda_{L_d}^\phi$, then we have a double equality in \eqref{eq:ineqzeta1}. This implies also equality in \eqref{eq:NNN} which is equivalent to  $e^*(L)=e^*(L_d)$. Therefore, it follows that $L=L_d$ up to rotation, by uniqueness of the minimizer $L_d$ of $e^*$. 
\end{proof}

\begin{figure}[H]
\begin{center}
	\includegraphics[width=10cm]{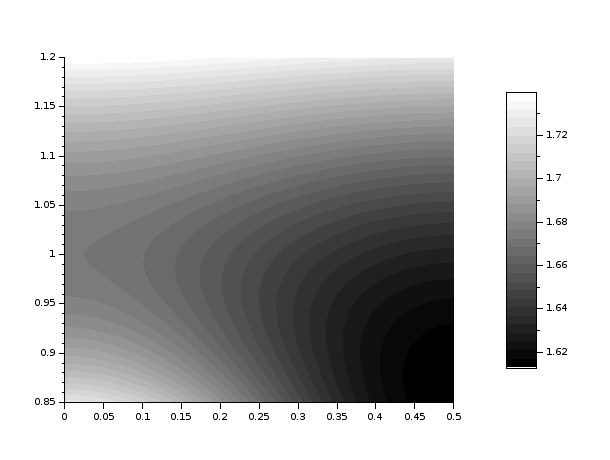}
	\end{center}
	\caption{Contour plot of $L\mapsto (b/a)^{1/6} \lambda_L^\phi=\left(\frac{12
              \zeta_L(12)}{6\zeta_L(6)}\right)^{1/6} $,
            see \eqref{eq: lambda2-def}, in the fundamental domain
             $\mathcal D$. The triangular lattice $\mathsf{A}_2$ with
             coordinates $(1/2,\sqrt{3}/2)$ is the unique minimizer
             {for any choice of $a,\, b>0$}.}
	\label{fig:LambdaLJ126}
\end{figure}

We refer to Figure \ref{fig:LambdaLJ126} for an illustration  in the  two-dimensional  case $(\alpha,\beta) = (12,6)$. Note that in this case the global minimum is not known. Still,  the triangular lattice appears to be the first stable structure reached by increasing the volume (decreasing the density). This is in agreement  with   Figure \ref{fig:LJ126} and Proposition~\ref{lem:LJminlambda}.  
Recall that the triangular lattice also minimizes $L\mapsto
\zeta_L(\beta)$ on $\mathcal{L}_d(1)$, as required in the statement of
Proposition~\ref{lem:LJminlambda}, see~\cite{Mont}.  \\

Notice that in dimension $d=3$ there is no rigorous result concerning the minimizer of $E_\phi$ in $\mathcal{L}_3$. Only local minimality results for cubic lattices $\Z^3,\mathsf{D}_3, \mathsf{D}_3^*$ have been derived in \cite{Beterminlocal3d}. Numerical investigations suggest that $\lambda_{\mathsf{D}_3}^\phi \mathsf{D}_3$ is the unique minimizer of $E_\phi$ in $\mathcal{L}_3$ for any values $\alpha>\beta>d$ of the exponents, see, e.g., \cite[Figure 5]{ModifMorse}, \cite[Figures 5 and 6]{OptinonCM} and \cite[Conjecture~1.7]{Beterminlocal3d}. Therefore, we can conjecture that $\mathsf{D}_3$ is the unique minimizer of $L\mapsto \lambda_L^\phi$ in $\mathcal{L}_3(1)$ by application of Proposition \ref{lem:LJminlambda}. 


\section{Properties of the embedding energy $E_{F,\ff}$}\label{sec:EFEmbedding}
 In this section we focus on the properties of the  embedding energy
 $E_{F,\ff}$ given in \eqref{eq: embedding -energy}. Although other choices for the potential $F$ may been considered
 (see, e.g., \cite{FinnisSinclair,DawFoilesBaskeReviewEAM}), we
 concentrate ourselves on the one-well case (see, e.g., \cite{CaiYe} and references therein). In
that case, it is  clear that the global minimum of
$E_{F,\ff}$ in $\mathcal{L}_d$ can be achieved for {\it any} $L$
 by simply choosing $\lambda$
such that $E_{ \ff }[\lambda L]= r_0 = \argmin_{r>0} F(r)$. We now ask ourselves
what is the minimal scaling parameter $\lambda$ and the
corresponding lattice  $L\in
\mathcal{L}_d(1)$  for which $E_{F,\ff}[\lambda L]$
achieves $\min F$. In other words,  what is the minimizer of
$L\mapsto \lambda^{F,\ff}_L$ in $\mathcal{L}_d(1)$  (recall \eqref{eq:optimal lambda}).   
Physically, this would correspond to reach the ground state
of the embedding energy $\min F$ starting from a high-density
configuration by progressively decreasing the density.

\begin{thm}[Smallest volume meeting the global minimum]\label{thm-generalFf}
Let $F\colon\R_+\to \R$ be a  one-well potential and let  $ \ff \in \mathcal{S}^+_d$ be
strictly decreasing.   Then, $\lambda^{F,\ff}_L$ exists and
$\min F$ is achieved by choosing $\lambda^{F,\ff}_L L$ for all $L\in
\mathcal{L}_d(1)$. Furthermore, if $L_d$ is the unique minimizer in
$\mathcal{L}_d(1)$ of $L\mapsto E_{ \ff }[ \lambda_{L_d}^{F,\ff} L]$, 
then $L_d$ is the unique minimizer in $\mathcal{L}_d(1)$ of $L\mapsto
\lambda^{F,\ff}_L$.  
\end{thm}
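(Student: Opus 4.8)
The plan is to observe that, for each fixed $L\in\mathcal{L}_d(1)$, the function
\[
g_L(\lambda):=E_{\ff}[\lambda L]=\sum_{q\in L\setminus\{0\}}\ff(\lambda|q|),\qquad \lambda>0,
\]
is a strictly decreasing continuous bijection of $(0,\infty)$ onto $(0,\infty)$, so that the whole statement reduces to inverting $g_L$ at the value $r_0:=\argmin_{r>0}F(r)$.

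First I would collect the elementary properties of $g_L$. Since $\ff\in\mathcal{S}_d^+$ is strictly decreasing, it is in fact strictly positive (a strictly decreasing function on $\R_+$ attaining the value $0$ would be negative thereafter), so each summand $\ff(\lambda|q|)$ is positive and strictly decreasing in $\lambda$; together with $\ff(r)={\rm O}(r^{-d-\eta})$ this gives $g_L(\lambda)<\infty$ for every $\lambda>0$, uniform convergence of the series on compact $\lambda$-intervals, hence continuity, and $g_L(\lambda)\to 0$ as $\lambda\to\infty$ by comparison with $\int_{\R^d}\ff(\lambda|x|)\,{\rm d}x$. For $\lambda\to 0^+$, monotone convergence yields $g_L(\lambda)\uparrow\sum_{q\in L\setminus\{0\}}\ff(0^+)=+\infty$, since there are infinitely many lattice vectors and $\ff(0^+)\in(0,\infty]$. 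Thus $g_L$ is a strictly decreasing continuous bijection $(0,\infty)\to(0,\infty)$, and I define $\lambda^{F,\ff}_L$ as the unique $\lambda$ with $g_L(\lambda)=r_0$. Because $F$ is decreasing on $(0,r_0)$ and increasing on $(r_0,\infty)$ while $g_L$ is strictly decreasing, the map $\lambda\mapsto E_{F,\ff}[\lambda L]=F(g_L(\lambda))$ is strictly decreasing on $(0,\lambda^{F,\ff}_L)$ and strictly increasing on $(\lambda^{F,\ff}_L,\infty)$; hence $\lambda^{F,\ff}_L$ exists, is the unique minimizer of $\lambda\mapsto E_{F,\ff}[\lambda L]$, and $E_{F,\ff}[\lambda^{F,\ff}_L L]=F(r_0)=\min F$. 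This settles the first two assertions.

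For the last assertion, set $\mu:=\lambda^{F,\ff}_{L_d}$, so that $E_{\ff}[\mu L_d]=g_{L_d}(\mu)=r_0$ by construction. The hypothesis states precisely that $L_d$ is the unique minimizer over $\mathcal{L}_d(1)$ of $L\mapsto E_{\ff}[\mu L]$, i.e.\ $E_{\ff}[\mu L]\ge E_{\ff}[\mu L_d]=r_0$ for every $L\in\mathcal{L}_d(1)$, with equality only if $L=L_d$. Fixing $L\in\mathcal{L}_d(1)$ and using $g_L(\lambda^{F,\ff}_L)=r_0$, we obtain $g_L(\mu)\ge r_0=g_L(\lambda^{F,\ff}_L)$, and strict monotonicity of $g_L$ forces $\mu\le\lambda^{F,\ff}_L$, that is $\lambda^{F,\ff}_L\ge\lambda^{F,\ff}_{L_d}$. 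Moreover $\lambda^{F,\ff}_L=\lambda^{F,\ff}_{L_d}$ would imply $g_L(\mu)=r_0=g_{L_d}(\mu)$, whence $L=L_d$ by the uniqueness in the hypothesis. Therefore $L_d$ is the unique minimizer of $L\mapsto\lambda^{F,\ff}_L$ in $\mathcal{L}_d(1)$, as claimed.

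I do not expect a genuine obstacle: the proof is essentially a monotone-inversion argument, in the same spirit as Proposition~\ref{lem:LJminlambda}, but simpler since here $g_L$ is monotone in $\lambda$. The only points deserving care are the two limiting regimes of $g_L$ (the sole places where the hypotheses $\ff\in\mathcal{S}_d^+$ and $\ff$ strictly decreasing are genuinely used) and the observation that the scaling $\mu$ entering the hypothesis is exactly the one realizing $E_{\ff}[\mu L_d]=r_0$, which is what links the hypothesis to the quantity $\lambda^{F,\ff}_L$ to be bounded.
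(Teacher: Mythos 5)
Your proposal is correct and follows essentially the same route as the paper: establish that $\lambda\mapsto E_{\ff}[\lambda L]$ is a strictly decreasing continuous map from $\infty$ to $0$, identify $\lambda^{F,\ff}_L$ as the unique solution of $E_{\ff}[\lambda L]=r_0$, and then use this monotonicity together with the uniqueness hypothesis to compare $\lambda^{F,\ff}_L$ with $\lambda^{F,\ff}_{L_d}$ (the paper phrases this last step as a proof by contradiction, while you argue it directly, but the inequality chain is the same). Your additional care with the limiting behaviour of $E_{\ff}[\lambda L]$ as $\lambda\to 0^+$ and $\lambda\to\infty$ only makes explicit what the paper asserts without detail.
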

\begin{proof}
Let $r_0>0$ be the unique  minimizer of $F$, namely, $F(r_0)
= \min F$. Given any $L\in \mathcal{L}_d(1)$, the fact that  $ \ff
 \in \mathcal{S}_d^+$  is strictly decreasing implies that  $\lambda\mapsto
E_{ \ff }[\lambda L]$ is strictly decreasing  and   goes to $0$ at
infinity and to $\infty$ at $0$. Therefore, there exists a unique
$\lambda>0$ such that   $E_{ \ff }[\lambda L] = r_0$. 
Such $\lambda$ obviously  
coincides with $\lambda_L^{F,\ff}$ given in \eqref{eq:optimal lambda}. This shows the first part of the statement.  

 Suppose now  
that  $L_d$ is the unique minimizer in $\mathcal{L}_d(1)$ of
$L\mapsto E_{ \ff }[\lambda_{L_d}^{F,\ff} L]$. 
Assume  by contradiction that there exists $L \in \mathcal{L}_d(1)$, $L \neq L_d$, with $\lambda^{F,\ff}_L \le \lambda_{L_d}^{F,\ff}$. By using that $\lambda\mapsto
E_{ \ff }[\lambda L]$ is decreasing, this  would imply 
$$
E_{ \ff }[\lambda^{F,\ff}_L L]  \ge  E_{ \ff }[\lambda_{L_d}^{F,\ff} L]  >   E_{ \ff }[\lambda_{L_d}^{F,\ff} L_d]=  r_0  = E_{ \ff }[\lambda^{F,\ff}_L L],
$$
a contradiction.  
We thus  deduce that $\lambda_{L_d}^{F,\ff}\leq \lambda^{F,\ff}_L$ for all $L\in \mathcal{L}_d(1)$, with equality if and only if $L=L_d$.
\end{proof}

We note that Theorem \ref{thm-generalFf} can be applied to the
choice   $ \ff (r)=r^{-s}$, $s>d$, and  the triangular lattice $\mathsf{A}_2$, the
$\mathsf{E}_8$ lattice, or  the Leech lattice $\Lambda_{24}$ in
dimensions $2$, $8$, and $24$, respectively. In fact, these lattices are the unique minimizer of $L\mapsto E_{ \ff }[\lambda L]$  for all $\lambda>0$, see \cite{Mont,CKMRV2Theta}.

 Let us mention that, in this setting, asking $F$ to be one-well is
 not restrictive.  
 In fact, if $F$ is a strictly increasing (resp.\ decreasing)
 function, no optimal scaling parameters $\lambda>0$
  can be  found  since, for any $L \in \mathcal{L}_d(1)$,  $E_{F,\ff}[\lambda L]$ will be minimized for $\lambda \to 0$ (resp.\ $\lambda \to \infty$).

\section{The EAM energy with  inverse-power interaction $\phi(r)=r^{-\alpha}$}\label{sec:TheoricIP}

In this section, we study the energy $\mathcal{E}$ defined in \eqref{eq: main eenergy} when $\phi$ is given by the inverse-power interaction $\phi(r)=r^{-\alpha}$. The main result of this section is the following.

\begin{thm}[EAM energy for inverse-power interaction]\label{thmLJGeneral}
For any $\alpha>s>d$,  let $ \ff (r)=r^{-s}$,  let $\phi(r) =
r^{-\alpha}$, and let $F\in C^1(\R_+)$. We assume that the functions 
\begin{align}\label{eq: g}
g(r):=r^{1-{\alpha}/{s}}F'(r) \quad   \textnormal{and}\quad h(r):=F(r)-\frac{s}{\alpha}r F'(r) \quad \quad \text{for $r>0$}
\end{align}
satisfy  that $g$ is strictly increasing on $I:=  \lbrace F' < 0 \rbrace$, that $g(I) = (-\infty,0)$,  and that $h\circ g^{-1}$
is strictly decreasing on $(-\infty,0)$. (Note that $g^{-1}$ exists on
$(-\infty,0)$ and takes values in $\R_+$.)  Then,
 $\lambda^{\mathcal{E}}_{L}$ exists for all $L \in \mathcal{L}_d(1)$  and the following statements are equivalent:
\begin{itemize}
\item $L_d$ is the unique minimizer in $\mathcal{L}_d(1)$ of
$ L \mapsto 
e^*(L)=\frac{\zeta_L(\alpha)^s}{\zeta_L(s)^\alpha}$, see  \eqref{eq: e*}; 
\item $\lambda^{\mathcal{E}}_{L_d} L_d$ is the unique minimizer of $\mathcal{E}$ in $\mathcal{L}_d$;
\item  $\lambda_{L_d}^{ \bar{\phi}} L_d$ is the unique minimizer in $\mathcal{L}_d$ of  $
E_{\bar{\phi}}$  for $\bar{\phi}(r) = r^{-\alpha} - r^{-s}$, see \eqref{eq:LJ}. 
\end{itemize}
In particular, when $d=2$ and $ H   (\alpha)< H  (s)$
where $ H  $ is defined by \eqref{eq:h}, then the unique minimizer of
$\mathcal{E}$ in $\mathcal{L}_2$  is the triangular lattice
$\lambda_{\mathsf{A}_2}^\mathcal{E} \mathsf{A}_2$. 

Furthermore, if $L_d$ is the unique minimizer of $L\mapsto \zeta_L(s)$
in $\mathcal{L}_d(1)$ as well as a minimizer of   $e^*$  in $\mathcal{L}_d(1)$, then $L_d$ is the unique minimizer of $L\mapsto \lambda^{\mathcal{E}}_L$ in $\mathcal{L}_d(1)$, where $\lambda^{\mathcal{E}}_L$ is defined in \eqref{eq:optimal lambda}.
\end{thm}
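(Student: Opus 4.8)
The plan is to reduce the minimization of $\mathcal{E}$ over all of $\mathcal{L}_d$ to a minimization over $\mathcal{L}_d(1)$ by computing, for each fixed unit-density lattice $L$, the optimal scaling $\lambda^{\mathcal{E}}_L$ explicitly. For $L \in \mathcal{L}_d(1)$ and $\lambda>0$ one has, using \eqref{eq: zeta},
\[
\mathcal{E}[\lambda L] = F\big(\lambda^{-s}\zeta_L(s)\big) + \lambda^{-\alpha}\zeta_L(\alpha).
\]
Differentiating in $\lambda$ and setting the derivative to zero, the critical-point equation becomes
\[
-s\lambda^{-s-1}\zeta_L(s)\, F'\big(\lambda^{-s}\zeta_L(s)\big) - \alpha\lambda^{-\alpha-1}\zeta_L(\alpha) = 0,
\]
which, writing $r := \lambda^{-s}\zeta_L(s)$ (so $\lambda^{-\alpha} = (r/\zeta_L(s))^{\alpha/s}$), rearranges to
\[
r^{1-\alpha/s}F'(r) = -\frac{\alpha}{s}\,\zeta_L(s)^{-\alpha/s}\zeta_L(\alpha) = -\frac{\alpha}{s}\,e^*(L)^{-1/s},
\]
i.e.\ $g(r) = -\tfrac{\alpha}{s}e^*(L)^{-1/s}$. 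Since $e^*(L)>0$ the right-hand side lies in $(-\infty,0)$, so by the hypothesis that $g$ is a strictly increasing bijection from $I$ onto $(-\infty,0)$ there is exactly one such $r$, hence exactly one critical $\lambda$; a boundary check ($\mathcal E[\lambda L]\to +\infty$ as $\lambda\to 0$, and as $\lambda\to\infty$ the embedding term tends to $F(0^+)$ while the $\phi$-term tends to $0$, so $\mathcal E$ stays bounded) confirms this critical point is the unique global minimizer, establishing existence of $\lambda^{\mathcal{E}}_L$.

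Next I would compute the minimal energy value. Plugging the critical relation back in, $\mathcal{E}[\lambda^{\mathcal{E}}_L L] = F(r) + \lambda^{-\alpha}\zeta_L(\alpha)$ where $\lambda^{-\alpha}\zeta_L(\alpha) = r\,\lambda^{-\alpha}\zeta_L(\alpha)/(\lambda^{-s}\zeta_L(s)) \cdot \lambda^{-(\alpha-s)}$... more cleanly: from the critical equation $\lambda^{-\alpha}\zeta_L(\alpha) = -\tfrac{s}{\alpha}\, r\, F'(r)$, so
\[
\mathcal{E}[\lambda^{\mathcal{E}}_L L] = F(r) - \frac{s}{\alpha}\,r\,F'(r) = h(r) = h\big(g^{-1}(-\tfrac{\alpha}{s}e^*(L)^{-1/s})\big).
\]
Thus the map $L \mapsto \mathcal{E}[\lambda^{\mathcal{E}}_L L]$ on $\mathcal{L}_d(1)$ is $h\circ g^{-1}$ composed with $L \mapsto -\tfrac{\alpha}{s}e^*(L)^{-1/s}$. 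Now $L\mapsto e^*(L)$ and $L\mapsto -\tfrac{\alpha}{s}e^*(L)^{-1/s}$ are related by a strictly increasing transformation (since $x\mapsto -\tfrac{\alpha}{s}x^{-1/s}$ is strictly increasing on $(0,\infty)$), and by hypothesis $h\circ g^{-1}$ is strictly decreasing on $(-\infty,0)$; composing, $L\mapsto \mathcal{E}[\lambda^{\mathcal{E}}_L L]$ is a strictly decreasing function of $e^*(L)$. Hence $L_d$ uniquely minimizes $e^*$ on $\mathcal{L}_d(1)$ if and only if $\lambda^{\mathcal{E}}_{L_d}L_d$ uniquely minimizes $\mathcal{E}$ on $\mathcal{L}_d$, which gives the first two equivalent statements. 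The third follows from the recap in Subsection~\ref{sec:LJ}: for $\bar\phi(r)=r^{-\alpha}-r^{-s}$, minimizing $E_{\bar\phi}$ over $\mathcal{L}_d$ is equivalent to minimizing $e^*$ over $\mathcal{L}_d(1)$ by \eqref{eq: e*}. The two-dimensional consequence is then immediate from \eqref{eq:h} and \cite[Theorem 1.2.B.]{BetTheta15}.

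For the final assertion I would invoke Theorem~\ref{thm-generalFf}, applied with the one-well potential being, effectively, $\lambda \mapsto \mathcal{E}[\lambda L]$; but more directly I would argue as in Proposition~\ref{lem:LJminlambda}. From the critical relation, $\lambda^{\mathcal{E}}_L$ is determined by $g(\lambda^{-s}\zeta_L(s)) = -\tfrac{\alpha}{s}e^*(L)^{-1/s}$, and one checks that $\lambda^{\mathcal{E}}_L$ depends monotonically on the pair $(\zeta_L(s), e^*(L))$: decreasing in $e^*(L)$ and, for fixed $e^*(L)$, increasing in $\zeta_L(s)$ (since larger $\zeta_L(s)$ forces larger $\lambda$ to keep $r=\lambda^{-s}\zeta_L(s)$ at the required value). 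If $L_d$ simultaneously minimizes $\zeta_L(s)$ and $e^*$ over $\mathcal{L}_d(1)$, then both monotonicities push $\lambda^{\mathcal{E}}_{L_d}$ below $\lambda^{\mathcal{E}}_L$ for every $L$, with equality forcing $e^*(L)=e^*(L_d)$ hence $L=L_d$ by uniqueness. The main obstacle I anticipate is the bookkeeping in the monotonicity argument for $\lambda^{\mathcal{E}}_L$: one must disentangle the two competing dependences on $\zeta_L(s)$ and $\zeta_L(\alpha)$ cleanly, and verify that the sign conventions in $g$ (which is negative on $I$) make the inequalities go the right way; rewriting everything in terms of the single scalar $e^*(L)$ wherever possible, exactly as in the proof of Proposition~\ref{lem:LJminlambda}, is the way to keep this under control.
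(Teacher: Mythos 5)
Your strategy is exactly the paper's: solve the critical-point equation via $g$, express the optimal energy as $h\circ g^{-1}$ of a quantity depending only on $e^*(L)$, and reduce to the Lennard-Jones functional $e^*$. However, there is a sign error in the key algebraic step which, taken literally, inverts the logic. From the critical-point equation one gets, with $r=\lambda^{-s}\zeta_L(s)$,
\begin{equation*}
g(r)\;=\;-\frac{\alpha}{s}\,\zeta_L(s)^{-\alpha/s}\zeta_L(\alpha)\;=\;-\frac{\alpha}{s}\,e^*(L)^{+1/s},
\end{equation*}
since $e^*(L)^{1/s}=\zeta_L(\alpha)/\zeta_L(s)^{\alpha/s}$; you wrote $e^*(L)^{-1/s}$ instead. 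This matters because the whole equivalence rests on the monotonicity of the composition: the correct map $x\mapsto -\frac{\alpha}{s}x^{1/s}$ is strictly \emph{decreasing} on $(0,\infty)$, so composing with the strictly decreasing $h\circ g^{-1}$ makes $L\mapsto \mathcal{E}[\lambda^{\mathcal E}_L L]=h\circ g^{-1}\bigl(-\frac{\alpha}{s}e^*(L)^{1/s}\bigr)$ strictly \emph{increasing} in $e^*(L)$, which is what identifies minimizers of $e^*$ with minimizers of $\mathcal E$. With your exponent, $x\mapsto -\frac{\alpha}{s}x^{-1/s}$ is increasing, the composite is decreasing in $e^*(L)$, and your own chain would prove that the minimizer of $e^*$ \emph{maximizes} $L\mapsto\min_\lambda\mathcal E[\lambda L]$ --- the opposite of what you then assert. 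The same slip propagates to the final statement: $\lambda^{\mathcal E}_L=\bigl(\zeta_L(s)/g^{-1}(-\frac{\alpha}{s}e^*(L)^{1/s})\bigr)^{1/s}$ is \emph{increasing} in $e^*(L)$ for fixed $\zeta_L(s)$ (not decreasing, as you claim), and it is precisely this that lets minimality of $\zeta_L(s)$ and of $e^*$ both push $\lambda^{\mathcal E}_{L_d}$ downward; with your stated monotonicities the two effects would pull in opposite directions and the conclusion would not follow.

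A secondary point: your boundary check ``as $\lambda\to\infty$ the embedding term tends to $F(0^+)$, so $\mathcal E$ stays bounded'' is not available for all admissible $F$ (e.g.\ $F(r)=-\log r$ in Remark \ref{rem:4} has $F(0^+)=+\infty$), and uniqueness of the critical point alone does not rule out that $\mathcal E[\lambda L]$ is decreasing on all of $(0,\infty)$. The paper avoids this by reading the sign of $\partial_\lambda\mathcal E[\lambda L]$ directly off the inequality $g(\lambda^{-s}\zeta_L(s))\le -\frac{\alpha}{s}e^*(L)^{1/s}$, which shows monotone decrease on $(0,\lambda^*)$ and increase on $(\lambda^*,\infty)$ without discussing limits. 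Once the exponent is corrected and the minimality of $\lambda^*$ is justified this way, the rest of your argument coincides with the paper's proof.
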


The gist of this result is the coincidence of the minimizers of
$\mathcal E$ with those of  $E_{\bar{\phi}}$ for  $\bar{\phi}(r) =  r^{-\alpha} - r^{-s}$  (up to proper rescaling),  under
quite general choices of $F$. This
results in a simplification of the minimality problem for $\mathcal
E$ as one reduces to the study of minimality for the Lennard-Jones-type potential $\bar
\phi$, which is already well known, see  Subsection~\ref{sec:LJ}.  In particular, in two
dimensions and under condition $ H   (\alpha)< H  (s)$, the unique
minimizer is a properly rescaled triangular lattice.

Before proving the
theorem, let us present some  applications to specific choices of
$F$. 

\begin{remark}[Application 1 - The classical case $F(r)=r\log r$]\label{rmk:applyrlogr} 
 We can apply this theorem to $F(r)=r^t\log  (\gamma  r)$ for $t\in (0,\alpha/s)$  and $\gamma>0$  which is a one-well potential with minimum attained at point $r_0^t := \frac{1}{\gamma}
e^{-1/t}$. In particular, the case $F(r) = r\log r $ is admissible  since $s<\alpha$.  In fact, we have  $I= (0,r_0^t)$ and
\begin{align*}
  &g(r)=r^{t-\alpha/s}\big(t \log (\gamma r) +1),\quad
    g'(r)=r^{t-\alpha/s-1}\left(\left(t
    -\frac{\alpha}{s}\right)(t \log (\gamma r) +1) +t\right),\\
  &h(r)= r^t \left(\left(1-\frac{t s}{\alpha}\right) \log (\gamma r)
    -\frac{s}{\alpha}\right), \quad h'(r)= r^{t-1} \left(t\left(1-\frac{t s}{\alpha}\right) \log (\gamma r)
    -\frac{2t s}{\alpha} +1\right).
\end{align*}
Since  $g$ is strictly increasing on $(0,r_1^t)$ for
$r_1^t :=  \frac{1}{\gamma}  e^{\frac{2t s -\alpha}{t(\alpha - t
    s)}}$ and
$r_0^t < r_1^t$ we have that $g$ is strictly increasing
on $I$. Moreover,   $g(I)=(-\infty,0)$. On the other hand,  
$h$ is strictly decreasing on $(0,r_1^t)$.  Therefore, also $h\circ g^{-1}$ is strictly decreasing  on $(-\infty,0)$.  Hence, 
Theorem \ref{thmLJGeneral}  applies.   
\end{remark}


\begin{remark}[Application 2 - Finnis-Sinclair model]\label{rmk:Finnis}
Theorem \ref{thmLJGeneral} can also be applied to
$F(r)=-c\sqrt{r}$ for $c>0$. This case is known as the long-range {\it
  Finnis-Sinclair model} defined in \cite{SuttonChen}, based on the work of Finnis and Sinclair \cite{FinnisSinclair} on the description of cohesion in metals  and also used as a model to test the validity of machine-learning algorithms \cite{Hernandezetal}. In this  case, we obtain
$$
g(r)=-\frac{c}{2 r^{\frac{\alpha}{s}-\frac{1}{2}}}\quad \textnormal{and}\quad h(r)=c\sqrt{r}\left( \frac{s}{2\alpha}-1 \right).
$$
Since $s<\alpha<2\alpha$, $g$  is  strictly  increasing on $I=\{F' <  0\}=\R_+$, $g(I)=(-\infty,0)$, and $h$ is strictly decreasing on $\R_+$. Therefore, Theorem \ref{thmLJGeneral}  applies. 
\end{remark}

\begin{remark}[Application 3 - inverse-power law]
 Also the inverse-power law $F(r) =r^{-t}$ for $t
  >0$ satisfies the assumption of the theorem.  In fact, we have
 $$g(r) = -t r^{-t-{\alpha}/{s}} \quad \text{and}\quad
 h(r)=\left(1+ \frac{st}{\alpha}\right)r^{-t}.$$
 In particular, $g$ is strictly increasing on $I=\{F'<0\} = \R_+$ and
 $g(I)=(-\infty,0)$. Moreover,   $h$ is strictly decreasing on $\R_+$
 and therefore also $h\circ g^{-1}$  is strictly decreasing on $(-\infty,0)$. 
\end{remark}

\begin{remark}[Application 4 - negative-logarithm]\label{rem:4}
We can apply Theorem \ref{thmLJGeneral} to the inverse-logarithmic
case $F(r) =-\log r$. Indeed, we compute
 $$g(r) = -r^{-\alpha/s} \quad \text{and}\quad
 h(r)=   -\log r   +\frac{s}{\alpha}.$$
We hence have  that $g$ is strictly increasing on $I=\{F'<0\} = \R_+$ and
 $g(I)=(-\infty,0)$. As $h$ is strictly decreasing on $\R_+$, we have
 that $h\circ g^{-1}$  is strictly decreasing on $(-\infty,0)$.  
\end{remark}

\begin{proof}[Proof of Theorem \ref{thmLJGeneral}]
 In view of   \eqref{eq: zeta} and  \eqref{eq: main eenergy},  for any $\lambda>0$ and $L\in \mathcal{L}_d(1)$ we have that 
$$
\mathcal{E}[\lambda L]=F(\lambda^{-s} \zeta_L(s)) + \lambda^{-\alpha}\zeta_L(\alpha).
$$
The critical points of $\lambda \mapsto \mathcal{E}[\lambda L]$ for fixed $L$ are the solutions of
\begin{equation}\label{eq:criticptgen}
 \partial_\lambda \mathcal{E}[\lambda L] =  -    s\lambda^{-s-1}\zeta_L(s) \, F'(\lambda^{-s} \zeta_L(s)) - \alpha \lambda^{-\alpha-1}\zeta_L(\alpha)=0.
\end{equation}
This is equivalent  to 
$$ 
g(\lambda^{-s} \zeta_L(s))        =   -\frac{\alpha}{s} \frac{\zeta_L(\alpha)}{\zeta_L(s)^{\frac{\alpha}{s}}}   =     -\frac{\alpha}{s}e^*(L)^{\frac{1}{s}},
\quad 
$$
where $g$ is given in \eqref{eq: g}, and $e^*(L)= \frac{\zeta_L(\alpha)^s}{\zeta_L(s)^\alpha}$ was defined in \eqref{eq: e*}.   Since $g^{-1}$ is positive and  strictly  increasing on  $ (-\infty,0)$,  we have that the unique critical point is given by 
\begin{align}\label{eq: lambda*}
 \lambda^* :=   \left(  \frac{\zeta_L(s)}{g^{-1}\left(  -\frac{\alpha}{s}e^*(L)^{\frac{1}{s}} \right)}\right)^{\frac{1}{s}}.
\end{align}
 In view of  \eqref{eq:criticptgen},  we also have that $\partial_\lambda \mathcal{E}[\lambda L]\geq 0$ if
and only if $g(\lambda^{-s} \zeta_L(s))\leq
-\frac{\alpha}{s}e^*(L)^{\frac{1}{s}}$,  which is equivalent to $\lambda\geq \lambda^*$. 
 In particular, $\lambda\mapsto \mathcal{E}[\lambda L]$ is decreasing on $(0, \lambda^*) $ and  increasing on $ (\lambda^*, \infty)$.  This  shows that $\lambda^*$ is a minimizer and thus  $\lambda^* = \lambda^{\mathcal{E}}_L$, where $\lambda^{\mathcal{E}}_L$ is defined in \eqref{eq:optimal lambda}. 

 By using  the fact that
$(\lambda^{\mathcal{E}}_L)^{-\alpha}\zeta_L(\alpha)=-\frac{s}{\alpha}(\lambda^{\mathcal{E}}_L)^{-s}\zeta_L(s)
F'((\lambda^{\mathcal{E}}_L)^{-s}\zeta_L(s))$ from \eqref{eq:criticptgen}  and the identity $\lambda^* = \lambda^{\mathcal{E}}_L$,  the minimal
energy among dilated  copies $\lambda L$ of a given lattice
$L$ can be checked to be 
\begin{align*}
\mathcal{E}[\lambda^{\mathcal{E}}_L L]&=F\big((\lambda^{\mathcal{E}}_L)^{-s} \zeta_L(s)\big) + (\lambda^{\mathcal{E}}_L)^{-\alpha}\zeta_L(\alpha)\\
&=F\big((\lambda^{\mathcal{E}}_L)^{-s} \zeta_L(s)\big) -\frac{s}{\alpha}(\lambda^{\mathcal{E}}_L)^{-s}\zeta_L(s) F'\big((\lambda^{\mathcal{E}}_L)^{-s}\zeta_L(s)\big)\\
&=h\big((\lambda^{\mathcal{E}}_L)^{-s}\zeta_L(s)\big)\\
&= h\circ g^{-1}\left( -\frac{\alpha}{s}e^*(L)^{\frac{1}{s}}  \right),
\end{align*}
where $h$ is defined in \eqref{eq: g}. 
By assumption $h\circ g^{-1}$ is strictly decreasing on $(-\infty,0)$. 
Hence,  $L_d$ minimizes $L\mapsto \mathcal{E}[\lambda^{\mathcal{E}}_L
L]$ in $\mathcal{L}_d(1)$ (uniquely)  if and only if $L_d$ minimizes
$e^*$ (uniquely). This shows the equivalence of the first two
items in the statement. The equivalence to the third item has already been addressed in
the discussion before \eqref{eq: e*}. The two-dimensional case is a simple application of \cite[Theorem 1.2.B.]{BetTheta15} which ensures that  $\mathsf{A}_2$ is the unique minimizer of $e^*$ in $\mathcal{L}_2(1)$, as it has been already recalled in  Subsection~\ref{sec:LJ}. 

To complete the proof, it remains to show  the  final statement in $d$
dimensions. Assume that  $L_d$ is the unique minimizer of $L\mapsto \zeta_L(s)$
in $\mathcal{L}_d(1)$ as well as a minimizer of  $e^*$ in $\mathcal{L}_d(1)$. In this case, by using \eqref{eq: lambda*} and the identity $\lambda^* = \lambda^{\mathcal{E}}_L$, it indeed follows that  $L_d$ is   the unique minimizer of $L\mapsto \lambda^{\mathcal{E}}_L$ in $\mathcal{L}_d(1)$, since $g^{-1}$ is positive and increasing on  $(-\infty,0)$. 
\end{proof}

\section{The EAM energy with Lennard-Jones-type interaction
  $\phi(r) = a r^{-\alpha}-br^{-\beta}$} \label{sec:LJEAM}

We now move on to consider the  full EAM  energy $\mathcal{E}$
defined in \eqref{eq: main eenergy} for  Lennard-Jones-type   potentials $\phi$ as in \eqref{eq:LJ}. We split this
section into two parts.  At  first, we address  
the classical case $F(r)=r\log r$ analytically and numerically. Afterwards, we provide some further numerical studies  for the power law case $F(r) = r^t$.

\subsection{The classical case $F(r)=r\log r$}\label{sec:LJEAMclassic}
 We start with two theoretical results and then proceed with several numerical investigations.

\subsubsection{Two theoretical results}

The following corollary  is a  straightforward application of Theorem
\ref{thm-generalFf}.  



\begin{corollary}[Existence of parameters for the optimality of $\mathsf{A}_2$]\label{cor:A2opt}
Let 
$$
F(r)=r^{ t}\log (\gamma r), \quad \ff  (r)=r^{-s},  \quad \textnormal{and}\quad \phi(r)=a r^{-\alpha} - b r^{-\beta},
$$
 for $\gamma, t  >0$, $s>2$,  $\alpha>\beta>2$,  and $a,b>0$. Then, given parameters $(\alpha,\beta, \gamma, s,{t})$ such that $
   H  (\alpha)< H   (\beta)$,  where $H$ is defined in \eqref{eq:h}, 
one can find  coefficients $a$ and $b$  such that  the unique global minimizer in $\mathcal{L}_2$ of $\mathcal{E}$
is the triangular lattice $\lambda_{\mathsf{A}_2} \mathsf{A}_2$ where
$$\lambda_{\mathsf{A}_2}=e^{\frac{ t^{-1} +\log\gamma}{s}}\zeta_{\mathsf{A}_2}(s)^{\frac{1}{s}}.$$
 Moreover, $\mathsf{A}_2$ is the   unique minimizer of $L\mapsto \lambda^{\mathcal{E}}_L$ in $\mathcal{L}_2(1)$. 
\end{corollary}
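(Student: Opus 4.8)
The plan is to deduce Corollary \ref{cor:A2opt} from Theorem \ref{thm-generalFf} applied to $\ff(r) = r^{-s}$ together with the information about the Lennard-Jones-type potential $\phi$ recalled in Subsection \ref{sec:LJ}. First I would observe that $F(r) = r^t \log(\gamma r)$ is a one-well potential with unique minimizer $r_0 > 0$ determined by $F'(r_0) = 0$, i.e. $t\log(\gamma r_0) + 1 = 0$, which gives $r_0 = \gamma^{-1} e^{-1/t}$ (in the form stated, $r_0^t = \gamma^{-1}e^{-1/t}$, hence $\log r_0 = -t^{-1}\log\gamma$... careful: $r_0 = (\gamma^{-1}e^{-1/t})^{1/t}$). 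The point is that $F$ meets the hypotheses of Theorem \ref{thm-generalFf}, and $\ff(r) = r^{-s}$ is strictly decreasing and lies in $\mathcal{S}_2^+$ since $s > 2$. By Montgomery's result \cite{Mont}, $\mathsf{A}_2$ is the unique minimizer of $L \mapsto E_\ff[\lambda L] = \lambda^{-s}\zeta_L(s)$ on $\mathcal{L}_2(1)$ for every $\lambda > 0$; in particular it is the unique minimizer of $L \mapsto E_\ff[\lambda_{\mathsf{A}_2}^{F,\ff} L]$. Theorem \ref{thm-generalFf} then yields that $\mathsf{A}_2$ is the unique minimizer of $L \mapsto \lambda_L^{F,\ff}$ on $\mathcal{L}_2(1)$, and also identifies $\lambda_{\mathsf{A}_2}^{F,\ff}$ as the unique $\lambda$ with $E_\ff[\lambda \mathsf{A}_2] = r_0$, i.e. $\lambda^{-s}\zeta_{\mathsf{A}_2}(s) = \gamma^{-1/t}\cdot$(appropriate power), which after solving for $\lambda$ gives the stated formula $\lambda_{\mathsf{A}_2} = e^{(t^{-1}+\log\gamma)/s}\zeta_{\mathsf{A}_2}(s)^{1/s}$.

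Next I would handle the full energy $\mathcal{E} = E_{F,\ff} + E_\phi$ and explain how to choose $a, b$ so that its global minimizer on $\mathcal{L}_2$ is exactly $\lambda_{\mathsf{A}_2}\mathsf{A}_2$. The idea is that $E_{F,\ff}$ attains its absolute lower bound $\min F$ precisely on the one-dimensional family of lattices $\{\lambda_L^{F,\ff} L : L \in \mathcal{L}_d(1)\}$, and among these, by the argument above, the one of smallest volume (smallest $\lambda$) is $\lambda_{\mathsf{A}_2}\mathsf{A}_2$. On the other hand, under the assumption $H(\alpha) < H(\beta)$, \cite[Theorem 1.2.B.]{BetTheta15} (as recalled in Subsection \ref{sec:LJ}) tells us that $\lambda_{\mathsf{A}_2}^\phi \mathsf{A}_2$ is the unique global minimizer of $E_\phi$ on $\mathcal{L}_2$, and that $\mathsf{A}_2$ uniquely minimizes $L \mapsto \lambda_L^\phi$ on $\mathcal{L}_2(1)$. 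The optimal scaling $\lambda_{\mathsf{A}_2}^\phi = (\alpha a \zeta_{\mathsf{A}_2}(\alpha)/(\beta b \zeta_{\mathsf{A}_2}(\beta)))^{1/(\alpha-\beta)}$ depends on $a,b$ and can be tuned, by choosing the ratio $a/b$ appropriately, so that $\lambda_{\mathsf{A}_2}^\phi = \lambda_{\mathsf{A}_2} = \lambda_{\mathsf{A}_2}^{F,\ff}$; moreover, by scaling $a$ and $b$ together (keeping the ratio fixed) we can make $\min_\lambda E_\phi[\lambda \mathsf{A}_2]$ as small (as negative) as we wish, hence the $E_\phi$-term will dominate in the sense needed below.

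The key step is then to combine the two pieces: I would argue that for the chosen $a,b$, every lattice $L \in \mathcal{L}_2$ satisfies $\mathcal{E}[L] \ge \min F + \min_{\mu} E_\phi[\mu \mathsf{A}_2]$, with equality iff $L = \lambda_{\mathsf{A}_2}\mathsf{A}_2$. Indeed $E_{F,\ff}[L] \ge \min F$ always, and $E_\phi[L] \ge \min_\mu E_\phi[\mu L] \ge \min_\mu E_\phi[\mu \mathsf{A}_2]$ by \cite{BetTheta15}; the subtlety is that the two terms are minimized at the \emph{same} lattice $\lambda_{\mathsf{A}_2}\mathsf{A}_2$ only because we have arranged $\lambda_{\mathsf{A}_2}^{F,\ff} = \lambda_{\mathsf{A}_2}^\phi$. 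For equality one needs $E_{F,\ff}[L] = \min F$ (forcing $E_\ff[L] = r_0$, i.e. $L = \lambda_L^{F,\ff}L'$ for some $L' \in \mathcal{L}_2(1)$) and simultaneously $E_\phi[L]$ minimal; the uniqueness statements from Theorem \ref{thm-generalFf} and \cite[Theorem 1.2.B.]{BetTheta15} then pin down $L' = \mathsf{A}_2$ and the scaling. Finally, the last sentence of the corollary — that $\mathsf{A}_2$ uniquely minimizes $L \mapsto \lambda_L^{\mathcal{E}}$ on $\mathcal{L}_2(1)$ — would follow by a variant of the Theorem \ref{thm-generalFf} argument applied to $\mathcal{E}$ directly, noting that $\lambda_L^{\mathcal{E}}$ is sandwiched between quantities controlled by $\zeta_L(s)$ and $e^*(L)$, both uniquely minimized at $\mathsf{A}_2$. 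The main obstacle I anticipate is making the ``choose $a, b$ so the two optimal scalings coincide and so $E_\phi$ dominates'' argument fully rigorous — in particular verifying that one genuinely has the freedom to match $\lambda_{\mathsf{A}_2}^\phi$ to $\lambda_{\mathsf{A}_2}^{F,\ff}$ while keeping $H(\alpha) < H(\beta)$ intact (which is fine, as that condition does not involve $a,b$) and confirming that no competing lattice at a \emph{different} scaling can tie the minimum.
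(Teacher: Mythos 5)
Your proposal is correct and follows essentially the same route as the paper: verify the hypotheses of Theorem \ref{thm-generalFf} for $F$ and $\ff$, compute $\lambda_{\mathsf{A}_2}^{F,\ff}$ from $E_{\ff}[\lambda \mathsf{A}_2]=\argmin F$, invoke \cite[Theorem 1.2]{BetTheta15} for the unique minimizer $\lambda_{\mathsf{A}_2}^{\phi}\mathsf{A}_2$ of $E_\phi$, and tune the ratio $a/b$ so that $\lambda_{\mathsf{A}_2}^{F,\ff}=\lambda_{\mathsf{A}_2}^{\phi}$, whence the two separately attained lower bounds are met simultaneously only at $\lambda_{\mathsf{A}_2}\mathsf{A}_2$ (your extra step of rescaling $a,b$ jointly so that $E_\phi$ ``dominates'' is not needed, since the bounds simply add). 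One small caution: the minimizer of $F$ is $r_0=\gamma^{-1}e^{-1/t}$ --- your first computation, not the parenthetical ``correction''; the superscript in the paper's notation $r_0^{t}$ is an index, not a power --- and only this value yields the stated formula for $\lambda_{\mathsf{A}_2}$.
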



\begin{proof}
We first remark that $F$ and $ \ff $ satisfy  the assumption of Theorem \ref{thm-generalFf}.  By recalling  \eqref{eq: zeta},  \eqref{eq:optimal lambda} and using the fact that $\argmin F = \frac{1}{\gamma} e^{-1/t}$,   we have 
$$
\lambda_{\mathsf{A}_2}^{F,\ff}=e^{\frac{ t^{-1} +\log\gamma}{s}}\zeta_{\mathsf{A}_2}(s)^{\frac{1}{s}},
$$ 
and $E_{F,\ff}[\lambda_{\mathsf{A}_2}^{F,\ff} \mathsf{A}_2] =  F( (\lambda_{\mathsf{A}_2}^{F,\ff})^{-s} \zeta_{\mathsf{A}_2}(s)   )   = \min F$.  
 On the other hand, we know from  \cite[Theorem 1.2]{BetTheta15} that $E_\phi$ is uniquely  minimized in $\mathcal{L}_2$ by $\lambda_{\mathsf{A}_2}^\phi \mathsf{A}_2$ where 
$$
\lambda_{\mathsf{A}_2}^\phi   =  \left( \frac{\alpha a \zeta_{\mathsf{A}_2}(\alpha)}{\beta b \zeta_{\mathsf{A}_2}(\beta)}\right)^{\frac{1}{\alpha-\beta}},
$$
see \eqref{eq: lambda2-def}.  Hence,   if
 $\lambda_{\mathsf{A}_2}^{F,\ff}=\lambda_{\mathsf{A}_2}^\phi$,  then   $\lambda_{\mathsf{A}_2}^{F,\ff} \mathsf{A}_2=
\lambda_{\mathsf{A}_2}^\phi \mathsf{A}_2$ is the unique minimizer of the
sum of the two energies $E_{F,\ff}$ and $E_\phi$.  The identity 
$\lambda_{\mathsf{A}_2}^{F,\ff}=\lambda_{\mathsf{A}_2}^\phi$  is  equivalent to equation 
$$
\frac{a}{b}=\frac{\beta \zeta_{\mathsf{A}_2}(\beta)}{\alpha \zeta_{\mathsf{A}_2}(\alpha)}\zeta_{\mathsf{A}_2}(s)^{\frac{\alpha-\beta}{s}}e^{\frac{\alpha-\beta}{s}(t^{-1}  +\log \gamma)}.
$$
 For this choice of $a$ and $b$, we thus get that the unique global minimizer in $\mathcal{L}_2$ of $\mathcal{E}$
is the triangular lattice $\lambda^\mathcal{E}_{\mathsf{A}_2} \mathsf{A}_2$ with   $\lambda^{\mathcal{E}}_{\mathsf{A}_2}=\lambda_{\mathsf{A}_2}^{F,\ff}=\lambda_{\mathsf{A}_2}^\phi$.
The last  statement follows by applying
Proposition  \ref{lem:LJminlambda} to $L_d=\mathsf{A}_2$. 
\end{proof}

The drawback of the result is that it is not \emph{generic} in the
sense that it holds only for specific coefficients $a$ and $b$. We now
give a result  which holds  in any dimension for \emph{all}
coefficients $a,\,b>0$, at the expense of the fact that
$\phi$ and $ \ff $ need to have the same decay  ${\rm
  O}(r^{-s})$. In this regard, the result  is in the  spirit  Theorem~\ref{thmLJGeneral}
 but under the choice   $\phi(r)=ar^{-\alpha}-br^{-s}$.

\begin{thm}[EAM energy for Lennard-Jones-type  interaction]\label{thmLJEAM2}
Let $F$ be  as in Theorem \ref{thmLJGeneral}  and additionally suppose that $F$ is convex and in  $C^2(\R_+)$. Let  
$$
 \ff  (r)=r^{-s},\quad \phi(r)=ar^{-\alpha}-br^{-s},\quad \text{ for } \    d  <s<\alpha  \quad \text{and} \quad a,\,b>0.
$$
Then,  $\lambda^{\mathcal{E}}_{L}$ exists for all $L \in \mathcal{L}_d(1)$  and the following statements are equivalent:
\begin{itemize}
\item $L_d$ is the unique minimizer of $ L \mapsto 
e^*(L)=\frac{\zeta_L(\alpha)^s}{\zeta_L(s)^\alpha}$, see \eqref{eq: e*}; 
\item $\lambda_{L_d}^{\mathcal{E}} L_d$ is the unique minimizer of $\mathcal{E}$ in $\mathcal{L}_d$;
\item $\lambda_{L_d}^{\phi} L_d$ is the unique minimizer in $\mathcal{L}_d$ of
 $
E_{{\phi}}$. 
\end{itemize}
In particular, when $d=2$ and $ H   (\alpha)< H  (s)$
where $ H  $ is defined by \eqref{eq:h}, then the unique minimizer of
$\mathcal{E}$ in $\mathcal{L}_2$ is the triangular lattice
$\lambda_{\mathsf{A}_2}^\mathcal{E} \mathsf{A}_2$.

Furthermore, if $L_d$ is the unique minimizer of $L\mapsto \zeta_L(s)$ in $\mathcal{L}_d(1)$ as well as a  minimizer  of $e^*$ in $\mathcal{L}_d(1)$, then $L_d$ is the unique minimizer of $L\mapsto \lambda^{\mathcal{E}}_L$ in $\mathcal{L}_d(1)$.
\end{thm}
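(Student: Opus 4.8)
The plan is to absorb the attractive part $-br^{-s}$ of $\phi$ into the embedding function, so that the statement reduces to (the proof of) Theorem~\ref{thmLJGeneral}. For $L\in\mathcal{L}_d(1)$ and $\lambda>0$ we have $E_{\ff}[\lambda L]=\lambda^{-s}\zeta_L(s)$ and $E_\phi[\lambda L]=a\lambda^{-\alpha}\zeta_L(\alpha)-b\lambda^{-s}\zeta_L(s)$, hence
\[
\mathcal{E}[\lambda L]=\widetilde F\big(\lambda^{-s}\zeta_L(s)\big)+a\lambda^{-\alpha}\zeta_L(\alpha),\qquad \widetilde F(r):=F(r)-br .
\]
This is precisely the form handled in Theorem~\ref{thmLJGeneral} with $F$ replaced by $\widetilde F\in C^2(\R_+)$; the only nominal difference is the positive constant $a$ multiplying $\zeta_L(\alpha)$, which is harmless, as it only rescales the argument $-\tfrac{\alpha}{s}e^*(L)^{1/s}$ of $g^{-1}$ and $h\circ g^{-1}$ by the positive factor $a$, preserving all monotonicities. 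So the task is to check that $\widetilde F$ satisfies the structural hypotheses \eqref{eq: g} required there, namely that $\widetilde g(r):=r^{1-\alpha/s}\widetilde F'(r)$ is strictly increasing on $\widetilde I:=\{\widetilde F'<0\}$ with $\widetilde g(\widetilde I)=(-\infty,0)$, and that $\widetilde h\circ\widetilde g^{-1}$ is strictly decreasing on $(-\infty,0)$, where $\widetilde h(r):=\widetilde F(r)-\tfrac{s}{\alpha}r\widetilde F'(r)$.

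To carry this out I would note first that, $F$ being convex, $F'$ is nondecreasing, so $\widetilde I=\{F'<b\}$ is an interval with left endpoint $0$ containing the nonempty set $I=\{F'<0\}$. Differentiating,
\[
\widetilde g'(r)=\Big(1-\tfrac{\alpha}{s}\Big)r^{-\alpha/s}\,\widetilde F'(r)+r^{1-\alpha/s}F''(r),\qquad \widetilde h'(r)=\Big(1-\tfrac{s}{\alpha}\Big)\widetilde F'(r)-\tfrac{s}{\alpha}\,r\,F''(r),
\]
and on $\widetilde I$, using $\widetilde F'<0$, $\alpha>s$, and $F''\ge0$, the first term of $\widetilde g'$ is positive and the second nonnegative (so $\widetilde g'>0$), while both terms of $\widetilde h'$ are nonpositive with the first strictly negative (so $\widetilde h'<0$). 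Moreover $\widetilde g(r)\to-\infty$ as $r\to0^+$ (because $\alpha>s$ forces $r^{1-\alpha/s}\to+\infty$ while $\widetilde F'<0$) and $\widetilde g(r)\to0^-$ at the right endpoint of $\widetilde I$ (where $\widetilde F'\to0$), so $\widetilde g$ is an increasing bijection of $\widetilde I$ onto $(-\infty,0)$ and $\widetilde h\circ\widetilde g^{-1}$, being a decreasing function composed with an increasing one, is strictly decreasing on $(-\infty,0)$. I expect this verification — that the perturbation $r\mapsto-br$ preserves the hypotheses — to be the only genuine point of the argument; it is exactly here that the additional assumptions on $F$ (convexity, $C^2$) are used, together with the sign condition $\alpha>s$.

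Once this is established, Theorem~\ref{thmLJGeneral} (applied to $\widetilde F$, $\ff(r)=r^{-s}$, interaction $ar^{-\alpha}$) gives the existence of $\lambda^{\mathcal{E}}_L$ for every $L\in\mathcal{L}_d(1)$, the identity $\mathcal{E}[\lambda^{\mathcal{E}}_L L]=\widetilde h\circ\widetilde g^{-1}\big(-\tfrac{\alpha a}{s}e^*(L)^{1/s}\big)$, and hence — since $\widetilde h\circ\widetilde g^{-1}$ is strictly decreasing and $x\mapsto-\tfrac{\alpha a}{s}x^{1/s}$ maps $(0,\infty)$ decreasingly onto $(-\infty,0)$ — that $L\mapsto\mathcal{E}[\lambda^{\mathcal{E}}_L L]$ and $e^*$ have the same unique minimizers on $\mathcal{L}_d(1)$, which is the equivalence of the first two items. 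The equivalence with the third item is immediate from the recap in Subsection~\ref{sec:LJ} with $\beta=s$ (there $\lambda_L^\phi L$ minimizes $E_\phi$ over dilations of $L$, and comparing these minima reduces to comparing $e^*(L)=\zeta_L(\alpha)^s/\zeta_L(s)^\alpha$). The two-dimensional statement then follows because, when $H(\alpha)<H(s)$, \cite[Theorem~1.2.B.]{BetTheta15} identifies $\mathsf{A}_2$ as the unique minimizer of $e^*$ on $\mathcal{L}_2(1)$. Finally, when $L_d$ is simultaneously the unique minimizer of $L\mapsto\zeta_L(s)$ and a minimizer of $e^*$ on $\mathcal{L}_d(1)$, the formula $\lambda^{\mathcal{E}}_L=\big(\zeta_L(s)\,/\,\widetilde g^{-1}(-\tfrac{\alpha a}{s}e^*(L)^{1/s})\big)^{1/s}$ (the analogue of \eqref{eq: lambda*}) gives $\lambda^{\mathcal{E}}_L\ge\lambda^{\mathcal{E}}_{L_d}$ with equality forcing $\zeta_L(s)=\zeta_{L_d}(s)$, hence $L=L_d$: the numerator is uniquely minimized at $L_d$, while $\widetilde g^{-1}$ being positive and increasing makes the denominator maximal at minimizers of $e^*$, exactly as at the end of the proof of Theorem~\ref{thmLJGeneral}.
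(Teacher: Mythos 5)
Your proposal is correct and follows essentially the same route as the paper: absorb the attractive term into the embedding function via $\tilde F(r)=F(r)-br$, verify that the hypotheses of Theorem~\ref{thmLJGeneral} (monotonicity of $\tilde g$, $\tilde g(\tilde I)=(-\infty,0)$, monotonicity of $\tilde h\circ\tilde g^{-1}$) survive this perturbation thanks to convexity and $\alpha>s$, and then invoke that theorem. The only cosmetic difference is that the paper normalizes by $a$, writing $\mathcal{E}=a(\tilde F(\zeta_L(s))+\zeta_L(\alpha))$ with $\tilde F(r)=a^{-1}(F(r)-br)$, whereas you keep the factor $a$ and observe it is harmless; your explicit computation of $\tilde h'<0$ is in fact slightly more detailed than the paper's.
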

\begin{proof}
 In view of \eqref{eq: zeta},  the energy $\mathcal E$ can be written as 
$$
\mathcal{E}[L]=F(\zeta_L(s))+a\zeta_L(\alpha)-b\zeta_L(s)=a\big(\tilde{F}(\zeta_L(s))+\zeta_L(\alpha)\big),
$$
where $\tilde{F}(r)=a^{-1}(F(r)-br)$.  In a similar fashion to \eqref{eq: g}, we define 
$$\tilde g(r) := r^{1-{\alpha}/{s}}\tilde F'(r) =a^{-1}g(r)-\frac{b}{a}r^{1-{\alpha}/{s}}
, \quad
\tilde h(r) := \tilde F(r) -\frac{s}{\alpha} r \tilde F'(r) = a^{-1}h(r)-\frac{b}{a}\Big(1-\frac{s}{\alpha}\Big)r,
$$
where $g$ and $h$ are defined in \eqref{eq: g}. 
 We first check that $\tilde{g}$  is strictly increasing on $\tilde{I}:= \lbrace  \tilde{F}' < 0 \rbrace$. Indeed, since  $F$ (and hence $\tilde F$) is convex and $\alpha >s$, we get that  
$$\tilde g'(r) = \left(1-\frac{\alpha}{s}\right)r^{-{\alpha}/{s}}\tilde F'(r) +
r^{1-{\alpha}/{s}}\tilde F''(r) \ge
\left(1-\frac{\alpha}{s}\right) r^{-{\alpha}/{s}}\tilde F'(r)>0$$
for all $r \in \tilde{I}$. Since by assumption $g(\lbrace F' < 0 \rbrace) = (-\infty,0)$ and $\tilde{I} = \lbrace  \tilde{F}' < 0 \rbrace \supset \lbrace {F}' < 0 \rbrace$, we find $\tilde{g}(\tilde{I}) = (-\infty,0)$. Eventually,  $\tilde{h}\circ \tilde{g}^{-1}$ is strictly decreasing on $(-\infty,0)$, as well.  We can hence apply 
Theorem~\ref{thmLJGeneral}  and obtain the assertion.  
\end{proof}

\begin{remark}
As a consequence of Remark \ref{rmk:applyrlogr}, the previous result can be applied to $F(r)=r\log r$. Already for this $F$, in the case of a more general Lennard-Jones potential $\phi(r)=ar^{-\alpha}-br^{-\beta}$, the equation for the critical points of $\lambda\mapsto \mathcal{E}[\lambda L]$ for a fixed lattice $L$ is
$$
\log \lambda = \frac{a'}{b'}\lambda^{s-\alpha}-\frac{d'}{b'}\lambda^{s-\beta} + \frac{c'}{b'}
$$
 for 
$a'=\alpha a \zeta_L(\alpha)$, $b' = s^2 \zeta_L(s)$, $c' =
s\zeta_L(s)(1+\log\zeta_L(s))$, and  $d'=\beta b \zeta_L(\beta)$. 
This  is generically not solvable in closed form  when $s\neq
\beta$, and  makes the computation of $\mathcal{E}[\lambda^{\mathcal{E}}_L L]$ more difficult. This is why we choose $s = \beta$ in the above result. 
\end{remark}

\subsubsection{Numerical investigation in 2d}


We choose $s$ as parameter and fix $t=\gamma=a=b=1$, and $\alpha=12$, $\beta = 6$,  i.e.,
\begin{equation}
F(r)=r\log r,\quad  \ff  (r)=r^{-s},\quad \phi(r)=\frac{1}{r^{12}}-\frac{1}{r^6}.\label{case0}
\end{equation}
 We employ here a gradient descent method, which is rather
computationally intensive. Note that a more efficient numerical method will be
amenable in  Subsection \ref{sec:LJEAMPowerLaw},  as an effect of a different structure of the potentials.
Numerically, we observe the following  (see Figure~\ref{fig:plotClassLJIPs}): 
\begin{itemize}
\item For $s>s_1$, $s_1\approx 5.14$, the triangular lattice
  $\lambda_{\mathsf{A}_2}^\mathcal{E}\mathsf{A}_2$ is apparently the unique global minimizer of $\mathcal{E}$.
\item For $s<s_1$, the energy does not seem to have a global minimizer.
\end{itemize}
Furthermore, for $s>s_0$, $s_0\approx 5.09$, we have checked (see Figure \ref{fig:plotTriSq}) that 
\begin{equation*}
\min_\lambda \mathcal{E}[\lambda \Z^2]= \mathcal{E}[\lambda_{\Z^2}^\mathcal{E}\Z^2]>\mathcal{E}[\lambda_{\mathsf{A}_2}^\mathcal{E}\mathsf{A}_2]=\min_\lambda \mathcal{E}[\lambda \mathsf{A}_2],
\end{equation*}
whereas the inequality is reversed if $s<s_0$.

\begin{figure}[H]
\begin{center}
	\includegraphics[width=79mm]{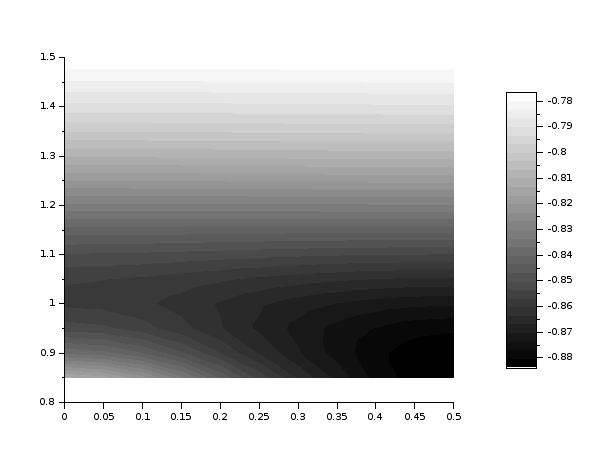}\quad 	\includegraphics[width=79mm]{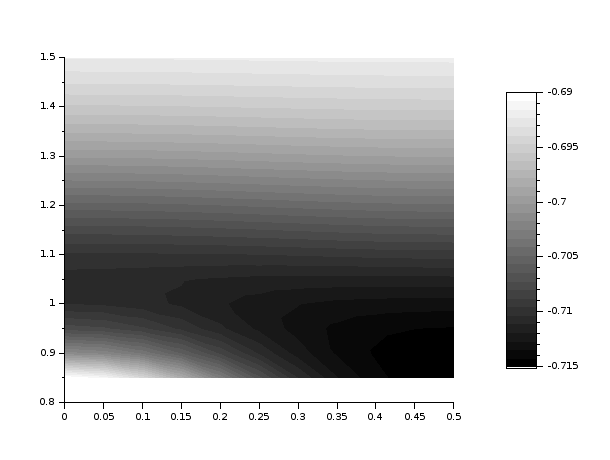}\\\includegraphics[width=79mm]{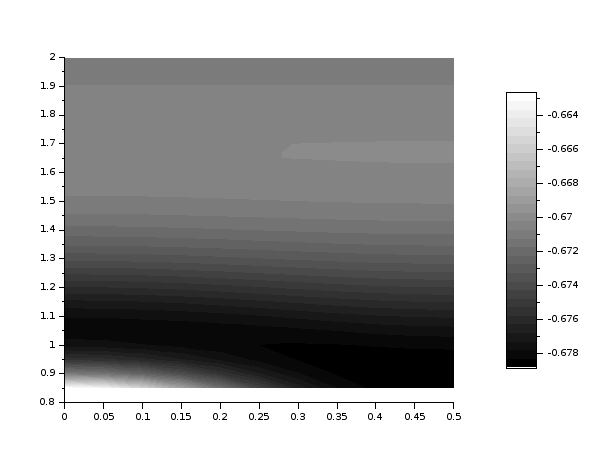} \quad \includegraphics[width=79mm]{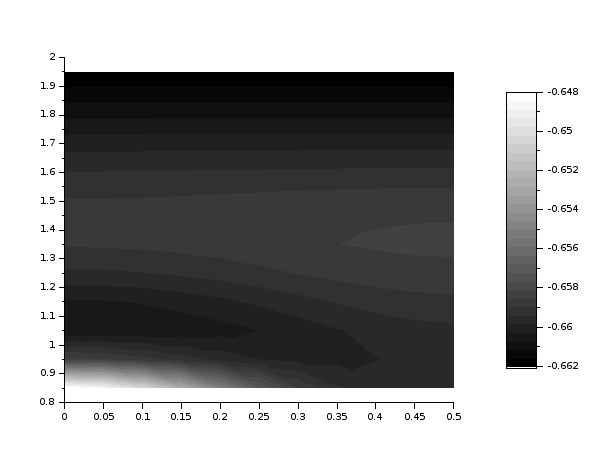}  
	\end{center}
	\caption{Case \eqref{case0} in two dimensions. Plot of $L\mapsto \min_\lambda \mathcal{E}[\lambda L]$ in the fundamental domain $\mathcal{D}$ for $s=6$ (up left), $s=5.3$ (up right), $s=5.15$ (down left) and $s=5.07$ (down right).}
	\label{fig:plotClassLJIPs}
\end{figure}

\begin{figure}[H]
\begin{center}
	\includegraphics[width=8cm]{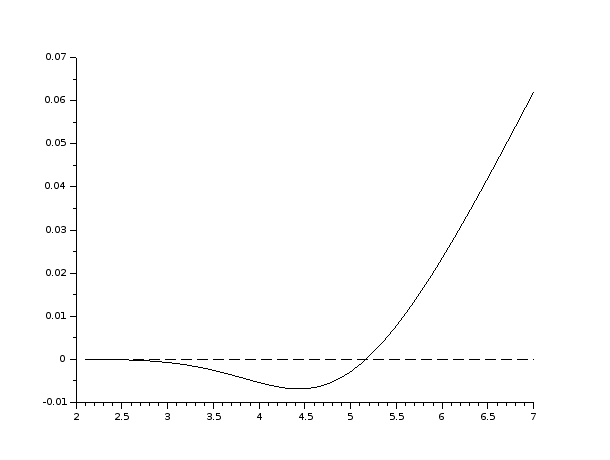}
	\end{center}
	\caption{Case \eqref{case0} in two dimensions. Plot of $s\mapsto \min_\lambda \mathcal{E}[\lambda \Z^2]- \min_\lambda \mathcal{E}[\lambda \mathsf{A}_2]$ for $s\in [2.1,7]$.}
	\label{fig:plotTriSq}
\end{figure}

We now replace $ \ff $ by a Gaussian function. Namely, we consider  the case
\begin{equation}
F(r)=r\log r,\quad  \ff  (r)=e^{-\delta r^2},\quad \phi(r)=\frac{1}{r^{12}}-\frac{1}{r^6}.\label{case1}
\end{equation}

\begin{figure}[H]
\begin{center}
	\includegraphics[width=8cm]{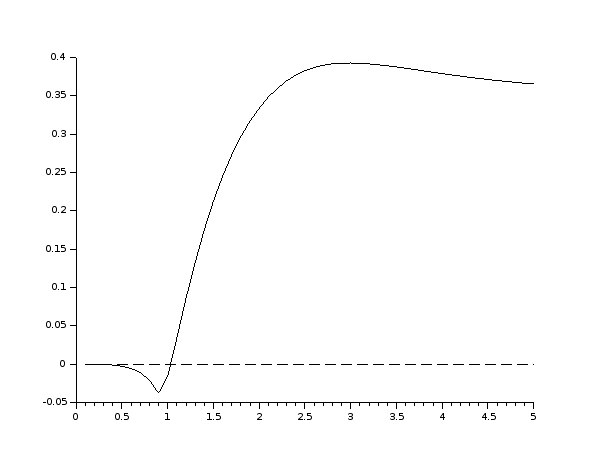}
	\end{center}
	\caption{Case \eqref{case1} in two dimensions. Plot of $ \delta\mapsto \min_\lambda \mathcal{E}[\lambda \Z^2]- \min_\lambda \mathcal{E}[\lambda \mathsf{A}_2]$ for $\delta\in [0.1,5]$.}
	\label{fig:plotTriSqGauss}
\end{figure}

In this case, the triangular lattice $\lambda_{\mathsf{A}_2}^\mathcal{E}\mathsf{A}_2$ still
seems to be minimizing $\mathcal{E}$ for large $\delta$,  see Figure \ref{fig:ClassLJGaussa2}.  More precisely:                                            
\begin{itemize}
\item There exists $\delta_0\approx 1.04$ such that, for $\delta>\delta_0$, the triangular lattice $\lambda_{\mathsf{A}_2}^\mathcal{E}\mathsf{A}_2$ is the global  minimizer  of $\mathcal{E}$ in $\mathcal{L}_2$.
\item For $\delta<\delta_0$, the global minimizer of $\mathcal{E}$ seems to move (continuously) in $\mathcal{D}$ increasingly following the $y$-axis as $\delta$ decreases to $0$. For instance, 
\begin{itemize}
\item If $\delta=1$, then the minimizer is $(0,y_1)$ where $y_1\approx 1.014$.
\item If $\delta=0.95$, then the minimizer is $(0,y_{0.95})$ where $y_{0.95}\approx 1.665$.
\end{itemize}
\item Furthermore, we have checked that, for $\delta>\delta_0$,
$$\min_\lambda \mathcal{E}[\lambda \Z^2]= \mathcal{E}[\lambda_{\Z^2}^\mathcal{E}\Z^2]>\mathcal{E}[\lambda_{\mathsf{A}_2}^\mathcal{E}\mathsf{A}_2]=\min_\lambda \mathcal{E}[\lambda \mathsf{A}_2],
$$
whereas the inequality is reversed if $\delta<\delta_0$ (see Figure \ref{fig:plotTriSqGauss}).
\end{itemize}

\begin{figure}[H]
\begin{center}
	\includegraphics[width=8cm]{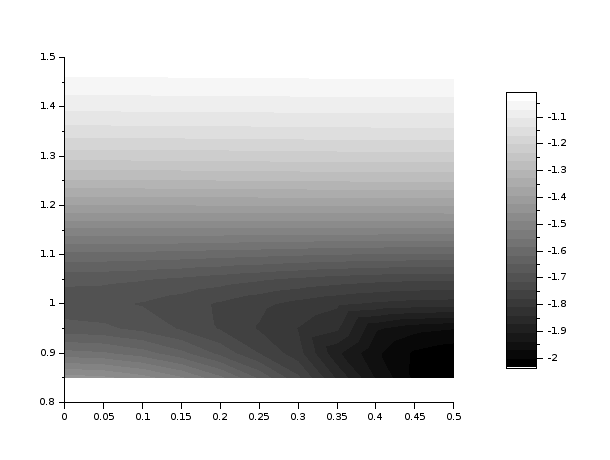}\quad \includegraphics[width=8cm]{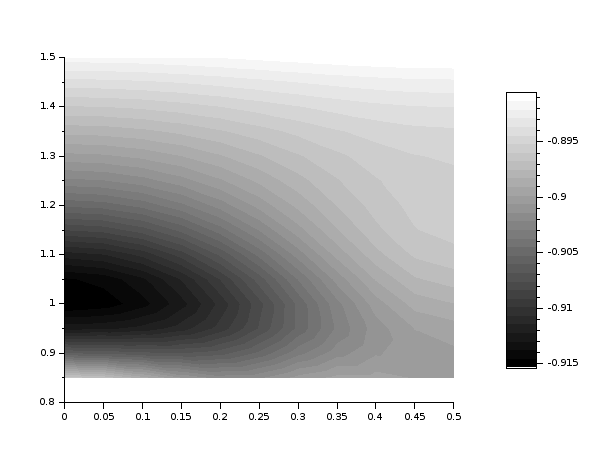}\\
	\includegraphics[width=8cm]{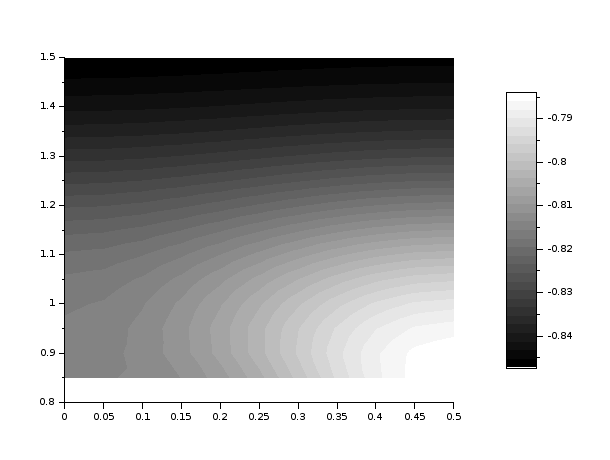}\quad \includegraphics[width=8cm]{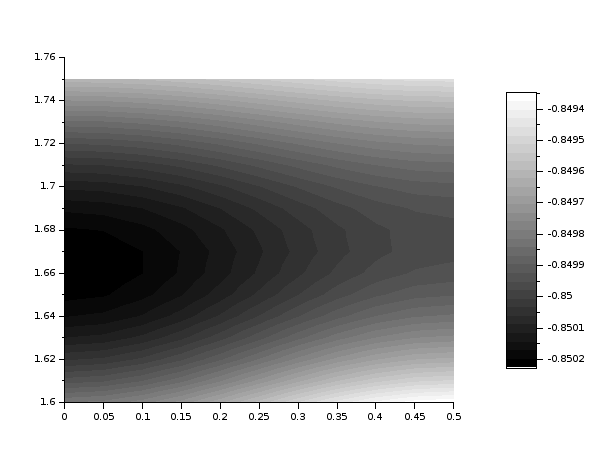}
	\end{center}
	\caption{Case \eqref{case1} in two dimensions. Plot of $L\mapsto \min_\lambda \mathcal{E}[\lambda L]$ when $\delta=2$ (up left), $\delta=1$ (up right) and $\delta=0.95$ (down left and right) in the fundamental domain $\mathcal{D}$.}
	\label{fig:ClassLJGaussa2}
\end{figure}

\subsubsection{Numerical investigation in 3d}
 Let us go back to case \eqref{case0}, now in three
dimensions. 
We investigate the difference of energies between the Simple
Cubic (SC), Face-Centered Cubic (FCC), and Body-Centered Cubic (BCC) lattices, namely,
$\Z^3, \mathsf{D}_3, \mathsf{D}_3^*$, as $s$ increases.  Examples
of FCC and BCC metals are Al, Cu, Ag, Au, Ni, Pd, Pt, and Nb, Cr, V,
Fe, respectively \cite{Wells}. Po is the only metal  crystallizing  in a SC
structure~\cite{Silva}. 

Before giving
 our numerical results, let us remark that   the lattices  $\Z^3$,
 $\mathsf{D}_3$, and $\mathsf{D}_3^*$  are critical points of
 $\mathcal{E}$ in $\mathcal{L}_3(1)$. Moreover, 
 recall the following conjectures:
\begin{itemize}
\item Sarnak-Strombergsson's conjecture (see \cite[Equation
  (44)]{SarStromb}): for all $s \ge  3/2$ (and in particular for
  $s>3$, so that   $r \mapsto r^{-s} \in \mathcal  S_3^+$),   $\mathsf{D}_3$ is the unique minimizer of $L\mapsto \zeta_L(s)$ in $\mathcal{L}_3(1)$. 
\item The global minimizer of the Lennard-Jones energy $E_\phi$  is $\lambda_{\mathsf{D}_3}^\phi \mathsf{D}_3$  (see e.g.~\cite[Figure~5]{ModifMorse} and \cite[Conjecture~1.7]{Beterminlocal3d}). 
\end{itemize}

We have numerically studied the following function
$$
s\mapsto \min_{\lambda >0} \mathcal{E}[\lambda L],\quad L\in \{\mathsf{D}_3,\mathsf{D}_3^*,\Z^3 \}
$$
 for $s>3$, see Figure~\ref{fig:plotIP3dclassic}.  We have found that there exist $s_0< s_1<s_2$ where $s_0\approx 5.4985$, $s_1\approx 5.576$, and $s_2\approx 5.584$ such that
\begin{itemize}
\item For $s\in (3,s_0)$, $ \min_{\lambda>0} \mathcal{E}[\lambda \Z^3]<\min_{\lambda>0} \mathcal{E}[\lambda \mathsf{D}_3^*]<\min_{\lambda >0} \mathcal{E}[\lambda  \mathsf{D}_3]$;
\item For $s\in (s_0,s_1)$, $\min_{\lambda >0} \mathcal{E}[\lambda \Z^3]<\min_{\lambda >0} \mathcal{E}[\lambda \mathsf{D}_3]<\min_{\lambda >0} \mathcal{E}[\lambda \mathsf{D}_3^*]$;
\item For $s\in (s_1,s_2)$, $\min_{\lambda >0} \mathcal{E}[\lambda \mathsf{D}_3]<\min_{\lambda >0} \mathcal{E}[\lambda \Z^3]<\min_{\lambda >0} \mathcal{E}[\lambda \mathsf{D}_3^*]$;
\item For $s>s_2$, $ \min_{\lambda >0} \mathcal{E}[\lambda \mathsf{D}_3]<\min_{\lambda >0} \mathcal{E}[\lambda \mathsf{D}_3^*]<\min_{\lambda>0} \mathcal{E}[\lambda  \Z^3]$.
\end{itemize}

 It is remarkable that  for small values of $s$ the simple cubic lattice $\Z^3$ has lower
energy with respect to the usually energetically favored
$\mathsf{D}_3$ and $\mathsf{D}_3^*$. 

\begin{figure}[H]
\begin{center}
	\includegraphics[width=8cm]{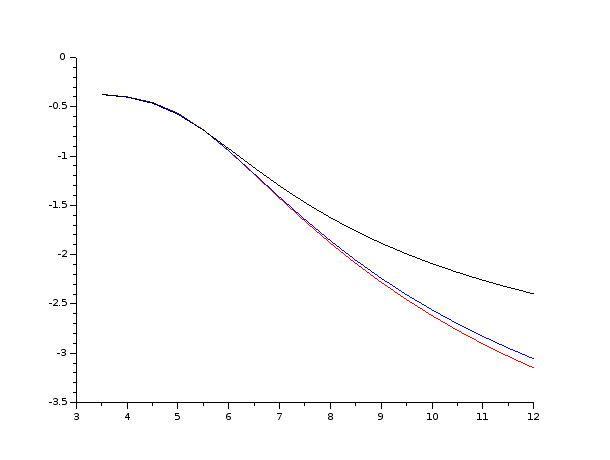}\quad \includegraphics[width=8cm]{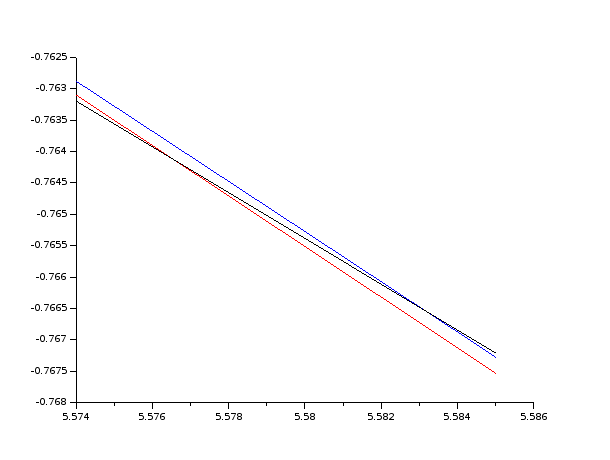}
	\end{center}
	\caption{Case \eqref{case0} in three dimensions.  Plots of $s\mapsto \min_\lambda \mathcal{E}[\lambda L]$ for $L=\mathsf{D}_3$ (red), $L=\mathsf{D}_3^*$ (blue) and $L=\Z^3$ (black) on  two different intervals.}
	\label{fig:plotIP3dclassic}
\end{figure}

 Consider now the Gaussian case \eqref{case1} in three dimensions. 
The total energy then reads 
$$
\mathcal{E}[L]:=\theta_L(\delta) \log \theta_L(\delta) + \zeta_L(12)-\zeta_L(6), \quad \textnormal{where}\quad \theta_L(\delta):=\sum_{p\in L\backslash \{0\}} e^{-\delta |p|^2}.
$$
 In the following, we will call $\theta_L(\delta)$ the {\it lattice theta function} with parameter
$\delta>0$. Note however that under this name one usually refers to
such sum including the term for $p=0$ and with weight $e^{-\delta \pi
  |p|^2}$. 

We recall the following conjectures: 
\begin{itemize}
\item  Sarnak-Strombergsson's  conjecture (see \cite[Equation (43)]{SarStromb}): if $\delta<\pi$, then $\mathsf{D}_3^*$ minimizes $L\mapsto \theta_L(\delta)$ in $\mathcal{L}_3(1)$. If $\delta>\pi$, then $\mathsf{D}_3$ minimizes the same lattice theta function  in $\mathcal{L}_3(1)$ (with a coexistence phase around $\pi$ actually).
\item As mentioned before,  the unique minimizer of the Lennard-Jones energy $E_\phi$ in $\mathcal{L}_3$ is $\lambda_{\mathsf{D}_3}^\phi \mathsf{D}_3$ (see e.g.~\cite{Beterminlocal3d} and \cite[Figure~5]{ModifMorse}). 
\end{itemize}


In Figure \ref{fig:3dLJTheta} we plot the functions $\delta\mapsto \min_{\lambda>0} \mathcal{E}[\lambda L]$  for  $L\in \{\mathsf{D}_3,\mathsf{D}_3^*,\Z^3 \}$. We numerically observe that  there  exist   $0<\delta_1< \delta_2<\delta_3$, where $\delta_1\approx 1.13$, $\delta_2\approx 1.21$, and $\delta_3\approx 1.223$ such  that
\begin{itemize}
\item for all $\delta\in (0, \delta_1)$, $ \min_{\lambda>0} \mathcal{E}[\lambda \Z^3]<\min_{\lambda>0} \mathcal{E}[\lambda \mathsf{D}_3^*]<\min_{\lambda >0} \mathcal{E}[\lambda \mathsf{D}_3]$; 
\item for all $\delta\in (\delta_1,\delta_2)$, $ \min_{\lambda>0} \mathcal{E}[\lambda \Z^3]<\min_{\lambda>0} \mathcal{E}[\lambda \mathsf{D}_3]<\min_{\lambda>0} \mathcal{E}[\lambda \mathsf{D}_3^*]$;
\item for all $\delta\in (\delta_2,\delta_3)$, $ \min_{\lambda >0} \mathcal{E}[\lambda \mathsf{D}_3]<\min_{\lambda>0} \mathcal{E}[\lambda \Z^3]<\min_{\lambda>0} \mathcal{E}[\lambda \mathsf{D}_3^*]$;
\item for all $\delta>\delta_3$,  $ \min_{\lambda>0} \mathcal{E}[\lambda \mathsf{D}_3]<\min_{\lambda>0} \mathcal{E}[\lambda \mathsf{D}_3^*]<\min_{\lambda >0} \mathcal{E}[\lambda \Z^3]$.
\end{itemize}

It is indeed important 
that  the EAM  energy 
favors $\mathsf{D}_3$ or $\mathsf{D}_3^*$ for  some
specific 
choice of parameters.  In fact, FCC and BCC lattices are commonly
emerging in metals. 
It is also remarkable that the 
simple cubic lattice $\Z^3$ (up to
rescaling)  is favored with respect to $\mathsf{D}_3$ or
$\mathsf{D}_3^*$ for some other  choice of parameters. In \cite{Beterminlocal3d},
we were able to identify a range of densities such that cubic lattices
are  {\it locally} optimal at fixed density, but it is the first time -- according to
our knowledge -- that such phenomenon is observed at the level of
 the {\it global} minimizer.

\begin{figure}[H]
\begin{center}
	\includegraphics[width=8cm]{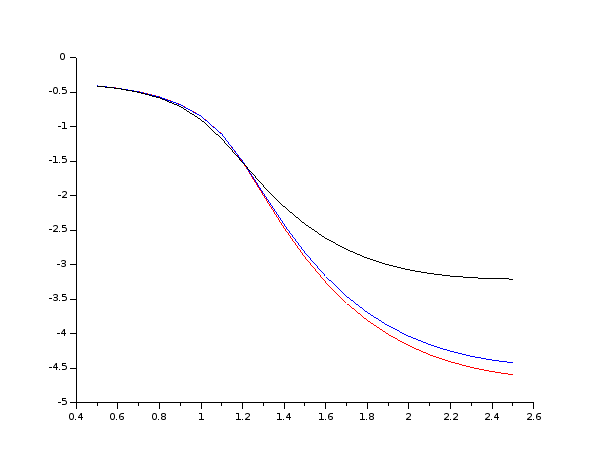}\quad \includegraphics[width=8cm]{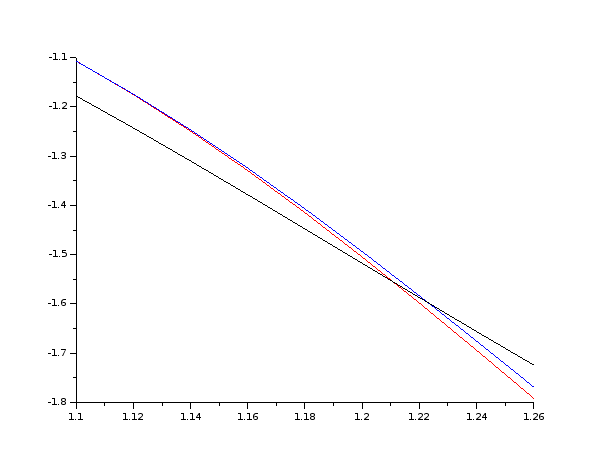}
	\end{center}
	\caption{Case \eqref{case1} in three dimensions. Plot of $\delta\mapsto \min_\lambda \mathcal{E}[\lambda L]$ for $L=\mathsf{D}_3$ (red), $L=\mathsf{D}_3^*$ (blue) and $L=\Z^3$ (black) on two different intervals.}
	\label{fig:3dLJTheta}
\end{figure}

\subsection{The power-law case $F(r)=r^t$}\label{sec:LJEAMPowerLaw}

  In this subsection,  we study the case where $F(r)=r^t$, $t>0$.
  Although 
  $F$ is not a one-well potential, 
   this case  turns  out to be  mathematically interesting. 
 Indeed, we are able to present  a special case where we can
explicitly compute $\min_\lambda \mathcal{E}[\lambda L]$ for any $L\in
\mathcal{L}_d(1)$. As we have seen above, this dimension reduction is
extremely helpful when one looks for the ground state of $\mathcal{E}$
in $\mathcal{L}_d$, especially for  $d=2$, since we can plot $L\mapsto
\min_\lambda \mathcal{E}[\lambda L]$ in the fundamental domain~$\mathcal{D}$.  

\subsubsection{A special power-law  case}

Let us now assume that
$$
F(r)=r^t, \quad  \ff  (r)=r^{-s}, \quad \phi(r)=ar^{-\alpha}- br^{-\beta},
$$
for $t>0$, $s>d$, $\alpha>\beta>d$, and $a,\,b>0$.  Therefore,  by \eqref{eq: zeta} we have, for any $\lambda>0$ and any $L\in \mathcal{L}_d(1)$, that
$$
\mathcal{E}[\lambda L]=\lambda^{-st}\zeta_L(s)^t + a\lambda^{-\alpha} \zeta_L(\alpha)-b\lambda^{-\beta} \zeta_L(\beta).
$$
For  a fixed lattice $L$, the critical points of $\lambda\mapsto \mathcal{E}[\lambda L]$ are the solutions of the following equation
\begin{align}\label{eq: CP}
b \beta \zeta_L(\beta) \lambda^{st+\alpha}-st \zeta_L(s)^t \lambda^{\alpha+\beta}-a\alpha \zeta_L(\alpha) \lambda^{st+\beta}=0.
\end{align}
 Solving this
equation in  closed form is impracticable out of  
a discrete set of parameter values. Correspondingly, 
comparing energy values  is even  more complicated than in
the pure  Lennard-Jones-type  case, which is already  challenging
when treated   in whole generality.

 Having pointed  out  this difficulty, we now focus on some additional
  specifications of the parameters, allowing to proceed further with
  the analysis. We have the following.

\begin{thm}[Special power-law case]
Let $\alpha,\beta, s$, and $t$ such that 
\begin{align}\label{eq: special-assu}	
 d<s,  \quad d<\beta < st < \alpha, \quad \textnormal{and}\quad \alpha=2st-\beta.
\end{align}
 Then, $\lambda^\mathcal{E}_{L}$ exists for all $L \in \mathcal{L}_d(1)$. Moreover,   $\lambda^\mathcal{E}_{L_d}L_d$  is a global minimizer in $\mathcal{L}_d$ of
$\mathcal{E}$,  now reading  
$$
\mathcal{E}[L]=\zeta_L(s)^t + a\zeta_L(\alpha)-b\zeta_L(\beta),
$$
 if and only if $L_d$ is a minimizer in $\mathcal{L}_d(1)$ of
\begin{align*}
e_*(L):&=-\frac{ \displaystyle C_1
         \frac{\zeta_L(s)^{2t}}{\zeta_L(\beta)}+C_2\zeta_L(s)^t\sqrt{c_1
         \frac{\zeta_L(s)^{2t}}{\zeta_L(\beta)^2} + c_2
         \frac{\zeta_L(\alpha)}{\zeta_L(\beta)}}+ C_3
         \zeta_L(\alpha)}{\displaystyle\left(\sqrt{c_1}
         \frac{\zeta_L(s)^t}{\zeta_L(\beta)}+ \sqrt{c_1
         \frac{\zeta_L(s)^{2t}}{\zeta_L(\beta)^2} + c_2
         \frac{\zeta_L(\alpha)}{\zeta_L(\beta)}}
         \right)^{\frac{\alpha}{\alpha-st}}}, \label{estar}
\end{align*}
where $C_i, c_j$, $i\in \{1,2,3\}$, $j\in \{1,2\}$, are positive constants defined by
\begin{equation} \label{constantCicj}
C_1:=\frac{st}{2b \beta}\left(\frac{st}{\beta}-1 \right),\quad C_2:=\frac{st}{\beta}-1 ,\quad C_3:=a\left(\frac{\alpha}{\beta}-1\right), \quad c_1:=\frac{s^2 t^2}{4 b^2 \beta^2},\quad c_2:=\frac{a\alpha}{b\beta}.
\end{equation}
\end{thm}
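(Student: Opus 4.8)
The plan is to reduce, for each fixed $L\in\mathcal{L}_d(1)$, the one–dimensional minimization $\min_{\lambda>0}\mathcal{E}[\lambda L]$ to a single quadratic equation, and then to recognize the resulting minimal value as $e_*(L)$. First I would record that, by \eqref{eq: zeta}, $\mathcal{E}[\lambda L]=\lambda^{-st}\zeta_L(s)^t+a\lambda^{-\alpha}\zeta_L(\alpha)-b\lambda^{-\beta}\zeta_L(\beta)$, so that its critical points solve \eqref{eq: CP}. Dividing \eqref{eq: CP} by $\lambda^{st+\beta}$ and using the arithmetic relations $\alpha-\beta=2(st-\beta)$ and $\alpha-st=st-\beta$ furnished by the assumption $\alpha=2st-\beta$ in \eqref{eq: special-assu}, the substitution $v:=\lambda^{st-\beta}$ — a bijection of $(0,\infty)$ onto itself, as $st>\beta$ — turns the critical point equation into the quadratic
$$
b\beta\,\zeta_L(\beta)\,v^2-st\,\zeta_L(s)^t\,v-a\alpha\,\zeta_L(\alpha)=0 .
$$
Since the coefficients have signs $+,-,-$, this admits a unique positive root $v_L=\frac{st\,\zeta_L(s)^t+\sqrt{s^2t^2\zeta_L(s)^{2t}+4ab\alpha\beta\,\zeta_L(\alpha)\zeta_L(\beta)}}{2b\beta\,\zeta_L(\beta)}$, hence a unique critical point $\lambda^{\mathcal{E}}_L=v_L^{1/(st-\beta)}$. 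To see it is the global minimizer (so that $\lambda^{\mathcal{E}}_L$ is well defined in the sense of \eqref{eq:optimal lambda}), I would observe that $\mathcal{E}[\lambda L]\to+\infty$ as $\lambda\to0^+$ — the $a\lambda^{-\alpha}\zeta_L(\alpha)$ term dominates — while $\mathcal{E}[\lambda L]\to0^-$ as $\lambda\to\infty$ — the $-b\lambda^{-\beta}\zeta_L(\beta)$ term dominates, because $\beta<st<\alpha$; by continuity the infimum is attained at an interior point, necessarily the unique critical point.

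Next I would compute $\mathcal{E}[\lambda^{\mathcal{E}}_L L]$. Writing $\lambda^{-st},\lambda^{-\alpha},\lambda^{-\beta}$ as powers of $v_L$, factoring out $v_L^{-\alpha/(st-\beta)}$, and using $(\alpha-st)/(st-\beta)=1$ and $(\alpha-\beta)/(st-\beta)=2$, one obtains
$$
\mathcal{E}[\lambda^{\mathcal{E}}_L L]=v_L^{-\alpha/(st-\beta)}\Big(\zeta_L(s)^t v_L+a\zeta_L(\alpha)-b\zeta_L(\beta)v_L^2\Big).
$$
Substituting $b\zeta_L(\beta)v_L^2=\tfrac{st}{\beta}\zeta_L(s)^t v_L+\tfrac{a\alpha}{\beta}\zeta_L(\alpha)$ from the quadratic collapses the parenthesis to $-\big(C_2\zeta_L(s)^t v_L+C_3\zeta_L(\alpha)\big)$, with $C_2,C_3$ as in \eqref{constantCicj} (both positive since $\beta<st<\alpha$). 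Finally, inserting the closed form of $v_L$ and using $\tfrac{st}{2b\beta}=\sqrt{c_1}$, $\tfrac{a\alpha}{b\beta}=c_2$, $C_2\sqrt{c_1}=C_1$, and $\alpha-st=st-\beta$, one rewrites $v_L=\sqrt{c_1}\,\tfrac{\zeta_L(s)^t}{\zeta_L(\beta)}+\sqrt{c_1\tfrac{\zeta_L(s)^{2t}}{\zeta_L(\beta)^2}+c_2\tfrac{\zeta_L(\alpha)}{\zeta_L(\beta)}}$ and checks that the previous display equals exactly the stated expression $e_*(L)$.

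With the identity $\mathcal{E}[\lambda^{\mathcal{E}}_L L]=e_*(L)$ established, the equivalence is immediate: every $\Lambda\in\mathcal{L}_d$ equals $\lambda L$ for some $\lambda>0$ and $L\in\mathcal{L}_d(1)$, so $\mathcal{E}[\Lambda]\ge\mathcal{E}[\lambda^{\mathcal{E}}_L L]=e_*(L)$; hence $\inf_{\mathcal{L}_d}\mathcal{E}=\inf_{L\in\mathcal{L}_d(1)}e_*(L)$, and $\lambda^{\mathcal{E}}_{L_d}L_d$ realizes this infimum precisely when $L_d$ realizes $\inf e_*$, which proves both implications. I expect the only genuine obstacle to be finding the reduction, i.e.\ noticing that under $\alpha=2st-\beta$ the change of variable $v=\lambda^{st-\beta}$ linearizes the exponents and produces a quadratic; after that the work is routine, the main care being the sign analysis showing the critical point is a global minimum and the somewhat lengthy verification that the back-substituted value matches the precise constants $C_1,C_2,C_3,c_1,c_2$ of \eqref{constantCicj}.
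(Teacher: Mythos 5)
Your proposal is correct and follows essentially the same route as the paper: the substitution $v=\lambda^{st-\beta}$ is identical to the paper's $X=\lambda^{\alpha-st}$ (since $\alpha-st=st-\beta$ under \eqref{eq: special-assu}), leading to the same quadratic, the same unique positive root, and the same back-substitution yielding $\mathcal{E}[\lambda^{\mathcal{E}}_L L]=e_*(L)$. Your justification that the unique critical point is the global minimizer (via the limits of $\mathcal{E}[\lambda L]$ as $\lambda\to0^+$ and $\lambda\to\infty$) is in fact slightly more explicit than the paper's one-line remark.
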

\begin{proof}
For any $L\in \mathcal{L}_d(1)$, any critical point of $\lambda\mapsto \mathcal{E}[\lambda L]$ satisfies (see \eqref{eq: CP}) 
$$
\lambda^{st+\beta}\left(b\beta \zeta_L(\beta)\lambda^{\alpha-\beta} - st\zeta_L(s)^t \lambda^{\alpha-st}-a\alpha \zeta_L(\alpha)  \right)=0.
$$
Since $\lambda>0$,  by  writing $X=\lambda^{\alpha-st}$  and using \eqref{eq: special-assu} we want to solve
$$
b\beta \zeta_L(\beta)X^2 - st\zeta_L(s)^t X-a\alpha \zeta_L(\alpha)=0,\quad X>0,
$$
for which the unique solution is
$$
X=\frac{st \zeta_L(s)^t + \sqrt{s^2 t^2 \zeta_L(s)^{2t} + 4a b \alpha \beta \zeta_L(\alpha)\zeta_L(\beta)}}{2 b \beta \zeta_L(\beta)}.
$$
Since $\alpha-st >0$ and  $b\beta \zeta_L(\beta)>0$, we find that the critical point is a minimizer and thus coincides with $\lambda^{\mathcal{E}}_L$ defined in \eqref{eq:optimal lambda}. More precisely, we have 
$$
 \lambda^{\mathcal{E}}_L=  \left( \frac{st \zeta_L(s)^t + \sqrt{s^2 t^2 \zeta_L(s)^{2t} + 4a b \alpha \beta \zeta_L(\alpha)\zeta_L(\beta)}}{2 b \beta \zeta_L(\beta)} \right)^{\frac{1}{\alpha-st}}.
$$
We hence get, for any $L\in \mathcal{L}_d(1)$, that
\begin{align*}
&\min_\lambda \mathcal{E}[\lambda L]=\mathcal{E}[\lambda^{\mathcal{E}}_L L]\\
&=(\lambda^{\mathcal{E}}_L)^{-st}\zeta_L(s)^t + a(\lambda^{\mathcal{E}}_L)^{-\alpha} \zeta_L(\alpha)-b(\lambda^{\mathcal{E}}_L)^{-\beta} \zeta_L(\beta)\\
&=(\lambda^{\mathcal{E}}_L)^{-\alpha}\left\{ \zeta_L(s)^t (\lambda^{\mathcal{E}}_L)^{\alpha-st} - b\zeta_L(\beta)(\lambda^{\mathcal{E}}_L)^{\alpha-\beta}+a\zeta_L(\alpha) \right\}\\
&= (\lambda^{\mathcal{E}}_L)^{-\alpha}\left\{ \zeta_L(s)^t (\lambda^{\mathcal{E}}_L)^{\alpha-st} - \frac{st\zeta_L(s)^t (\lambda^{\mathcal{E}}_L)^{\alpha-st}+a\alpha \zeta_L(\alpha)}{\beta}+a\zeta_L(\alpha) \right\}\\
&=(\lambda^{\mathcal{E}}_L)^{-\alpha}\left\{ \zeta_L(s)^t\left( 1-\frac{st}{\beta} \right)(\lambda^{\mathcal{E}}_L)^{\alpha-st} + a\zeta_L(\alpha)\left( 1- \frac{\alpha}{\beta}\right) \right\}\\
&=(\lambda^{\mathcal{E}}_L)^{-\alpha}\left\{ \zeta_L(s)^t\left( 1-\frac{st}{\beta} \right)\left(\frac{st \zeta_L(s)^t + \sqrt{s^2 t^2 \zeta_L(s)^{2t} + 4a b \alpha \beta \zeta_L(\alpha)\zeta_L(\beta)}}{2 b \beta \zeta_L(\beta)}  \right) + a\zeta_L(\alpha)\left( 1- \frac{\alpha}{\beta}\right) \right\}\\
&=(\lambda^{\mathcal{E}}_L)^{-\alpha}\left\{\frac{st}{2 b\beta}\left( 1-\frac{st}{\beta} \right) \frac{\zeta_L(s)^{2t}}{\zeta_L(\beta)}+ \left( 1-\frac{st}{\beta} \right) \zeta_L(s)^t\sqrt{\frac{s^2 t^2 \zeta_L(s)^{2t}}{4 b^2 \beta^2 \zeta_L(\beta)^2} + \frac{a\alpha \zeta_L(\alpha)}{b \beta \zeta_L(\beta)}   } + a\left( 1- \frac{\alpha}{\beta}\right)\zeta_L(\alpha)\right\},
\end{align*}
where  in  the fourth line we have used  the fact
that $\lambda^{\mathcal{E}}_L$ is a critical point of $\lambda\mapsto
\mathcal{E}[\lambda L]$, i.e., $b\beta
\zeta_L(\beta)(\lambda^{\mathcal{E}}_L)^{\alpha-\beta} - st\zeta_L(s)^t
(\lambda^{\mathcal{E}}_L)^{\alpha-st}-a\alpha \zeta_L(\alpha) =0$.  Note that by
assumption we have 
$$
1-\frac{st}{ \beta}<0,\quad 1-\frac{\alpha}{\beta}<0.
$$
It follows that, defining the  positive  constants $C_i,c_j$, $i\in \{1,2,3\}$, $j\in \{1,2\}$, as in \eqref{constantCicj}, that 
\begin{align*}
\min_\lambda \mathcal{E}[\lambda L]&= -(\lambda^{\mathcal{E}}_L)^{-\alpha}\left\{ C_1 \frac{\zeta_L(s)^{2t}}{\zeta_L(\beta)}+C_2\zeta_L(s)^t\sqrt{c_1 \frac{\zeta_L(s)^{2t}}{\zeta_L(\beta)^2} + c_2 \frac{\zeta_L(\alpha)}{\zeta_L(\beta)}}+ C_3 \zeta_L(\alpha) \right\}\\
&=- \frac{ \displaystyle C_1 \frac{\zeta_L(s)^{2t}}{\zeta_L(\beta)}+C_2\zeta_L(s)^t\sqrt{c_1 \frac{\zeta_L(s)^{2t}}{\zeta_L(\beta)^2} + c_2 \frac{\zeta_L(\alpha)}{\zeta_L(\beta)}}+ C_3 \zeta_L(\alpha)}{\displaystyle\left(\sqrt{c_1} \frac{\zeta_L(s)^t}{\zeta_L(\beta)}+ \sqrt{c_1 \frac{\zeta_L(s)^{2t}}{\zeta_L(\beta)^2} + c_2 \frac{\zeta_L(\alpha)}{\zeta_L(\beta)}}  \right)^{\frac{\alpha}{\alpha-st}}},
\end{align*}
which completes the proof.
\end{proof}

\subsubsection{Numerical investigations of the special power-law
  case in 2d and 3d}

We let $t\in (0,9/d)$ vary and fix 
$$
a=b=1,\quad \alpha=12,\quad \beta=6,\quad s=9/t,
$$
so that 
\begin{equation}
F(r)=r^t,\quad \ff  (r)=r^{-{9}/{t}},\quad \phi(r)=\frac{1}{r^{12}}-\frac{1}{r^6}.\label{case}
\end{equation}
Note that \eqref{eq: special-assu}	 holds under these assumptions. In two dimensions,  by testing as {$t\in (0,4.5)$} increases, we  observe  numerically the following: 
\begin{itemize}
\item If $t\in (0, t_1)$, $t_1\approx 1.605$, then $\mathsf{A}_2$ minimizes $e_*$ (see Figures~\ref{fig:s1} and  \ref{fig:sother});
\item If $t\in (t_1,t_2)$, where $t_2\approx 1.633$, then $\Z^2$ is a local  minimizer of $e_*$ but 
there seems to be  no minimizer for $e_*$ (see Figure~\ref{fig:sotherbis});
\item if  $t\in (t_2,4.5)$,   there seems to be  no minimizer for $e_*$, and $\Z^2$ is a saddle point (see Figure~\ref{fig:sother2}).
\end{itemize}


\begin{figure}[H]
\begin{center}
	\includegraphics[width=8cm]{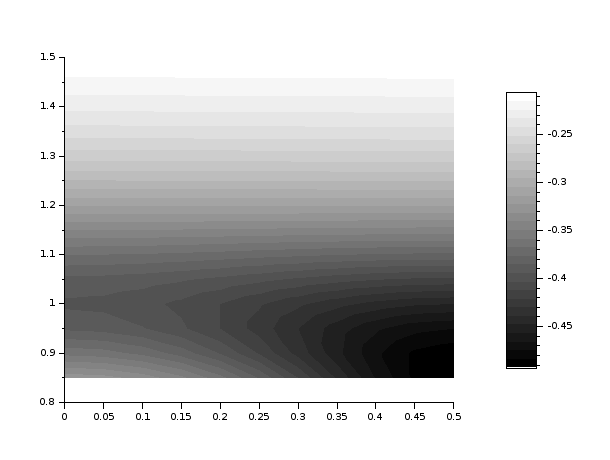}\quad 	\includegraphics[width=8cm]{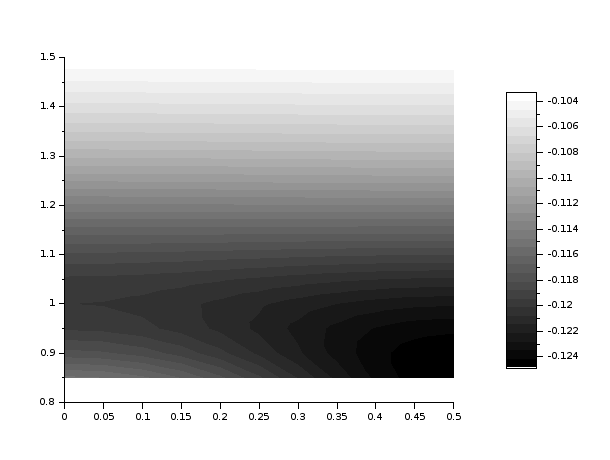}
	\end{center}
	\caption{Special power-law case \eqref{case} in two dimensions. Plot of
          $e_*$ on the fundamental domain $\mathcal D$.  For
          $t=1$ (left) and $t=1.5$ (right), the minimizer of $e_*$ is
          the triangular lattice $\mathsf{A}_2$ given by the point
          $(1/2,\sqrt{3}/2)$.}
	\label{fig:s1}
\end{figure}

\begin{figure}[h!]
\begin{center}
	\includegraphics[width=8cm]{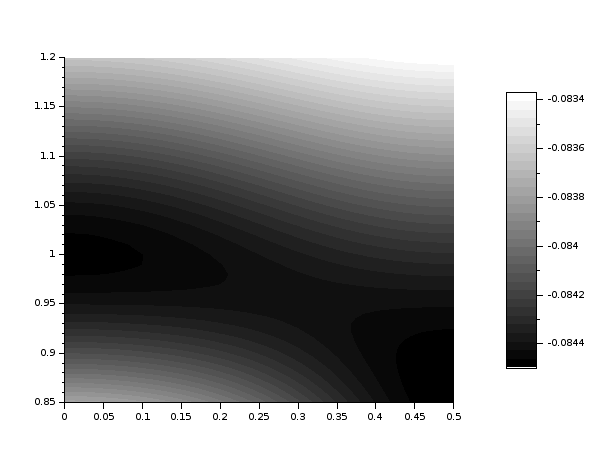}
	\end{center}
	\caption{Special power-law case  \eqref{case} in two dimensions. Plot of
          $e_*$ on the fundamental domain $\mathcal D$ for $t=1.605$. 
          The triangular lattice is the global minimizer of $e_*$
          whereas $\Z^2$ (given by the point $(0,1)$) is a local  minimizer. }
	\label{fig:sother}
\end{figure}

Similarly to the discussion of Subsection \ref{sec:LJEAMclassic},  for some choice of
parameters, a square lattice seems to be  locally  minimizing
  the EAM  energy,  at least within the range of
our numerical testing.  In \cite{Beterloc}, we have
identified a range of densities for which a square lattice is optimal
at fixed density. This seems however to be the first occurrence
of such minimality   among {\it all} possible lattices, without a
density constraint. Indeed, when minimizing among all lattices,
the square lattice  $\Z^2$   usually happens to be  a
saddle point, see, e.g., Figure \ref{fig:LJ126} for the Lennard-Jones
case.

\begin{figure}[h!]
\begin{center}
	 \includegraphics[width=7.9cm]{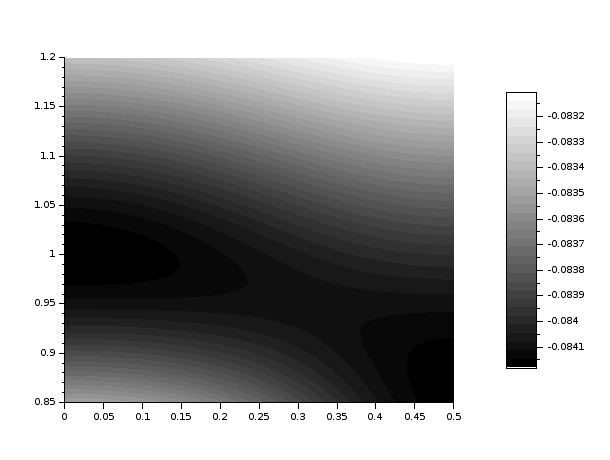} \quad
	 \includegraphics[width=7.9cm]{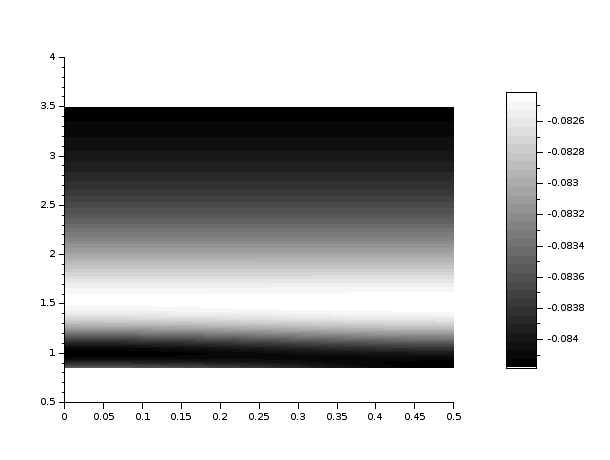}
	\end{center}
	\caption{Special power-law case  \eqref{case} in two dimensions. Plot of
          $e_*$ on the fundamental domain $\mathcal D$ for $t=1.606$.  The square lattice is a local  minimizer  of $e_*$ which
          does not have any  global  {minimizer}. Still,  $\mathsf{A}_2$ is a local minimizer. }
	\label{fig:sotherbis}
\end{figure}

\begin{figure}[H]
\begin{center}
	\includegraphics[width=8cm]{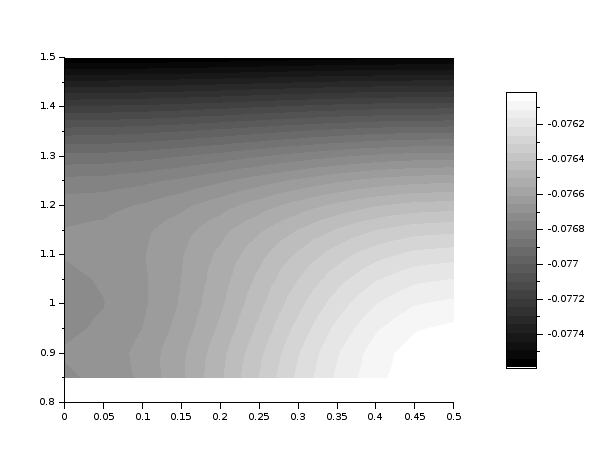}\quad
	 \includegraphics[width=8cm]{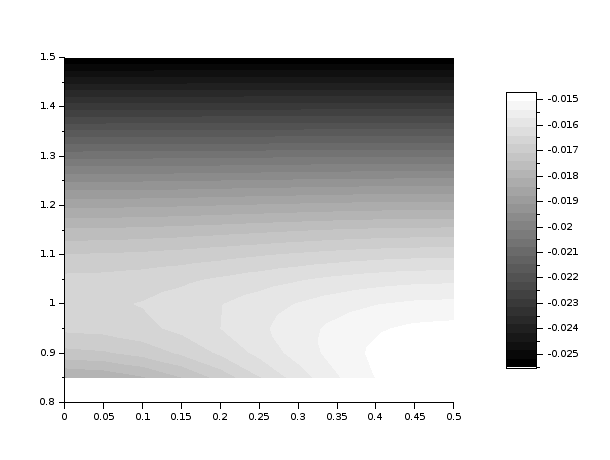}
	\end{center}
	\caption{Special power-law case  \eqref{case} in two dimensions.  Plot of
          $e_*$ on the fundamental domain $\mathcal D$.  For $t=1.632$ (left) the square lattice is a local  minimizer  of $e_*$ whereas $\mathsf{A}_2$ is a local  maximizer.  For $t=2$ (right) it seems that $e_*$ does not have any local minimizer and $\mathsf{A}_2$ stays a local  maximizer.  In both cases, there is no global minimum.}
	\label{fig:sother2}
\end{figure}

We have numerically investigated the three-dimensional case as well, 
comparing the energies of $L\in \{\Z^3, \mathsf{D}_3,
\mathsf{D}_3^*\}$. Figure~\ref{fig:s3d}  illustrates the numerical
results. We observe that there exist $t_1, t_2, t_3$, where $t_1\approx 1.5505$, $t_2\approx 1.5515$, and $t_3\approx 1.5647$ such that: 
\begin{itemize}
\item If $t\in (0,t_1)$, $e_*(\mathsf{D}_3)<e_*(\mathsf{D}_3^*)<e_*(\Z^3)$;
\item If $t\in (t_1,t_2)$, $e_*(\mathsf{D}_3)<e_*(\Z^3)<e_*(\mathsf{D}_3^*)$;
\item {If $t\in (t_2,t_3)$, $e_*(\Z^3)<e_*(\mathsf{D}_3)<e_*(\mathsf{D}_3^*)$};
\item {If $t\in (t_3,3)$, $e_*(\Z^3)<e_*(\mathsf{D}_3^*)<e_*(\mathsf{D}_3)$.}
\end{itemize}
When $t\to 0$, since $s=9/t\to \infty$ and $r^t\to 1$ for fixed $r>0$,
it is expected that the global minimizer of $\mathcal{E}$ in
$\mathcal{L}_3$ converges to the one of $E_\phi$, which  in turn  is expected to be a FCC lattice. This is supported by our numerics for $t<t_1$.
\begin{figure}[h!]
\begin{center}
	\includegraphics[width=8cm]{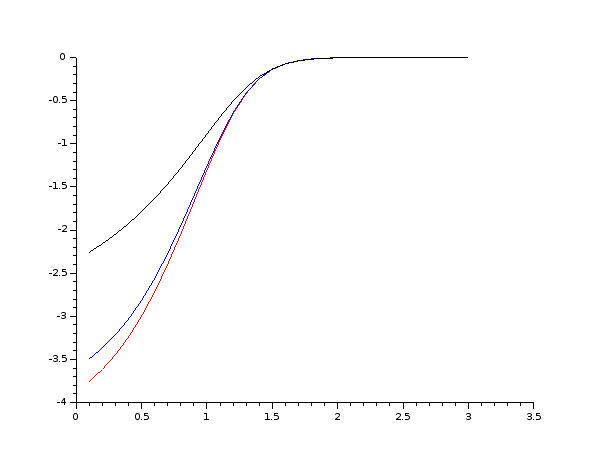}\quad
	 \includegraphics[width=8cm]{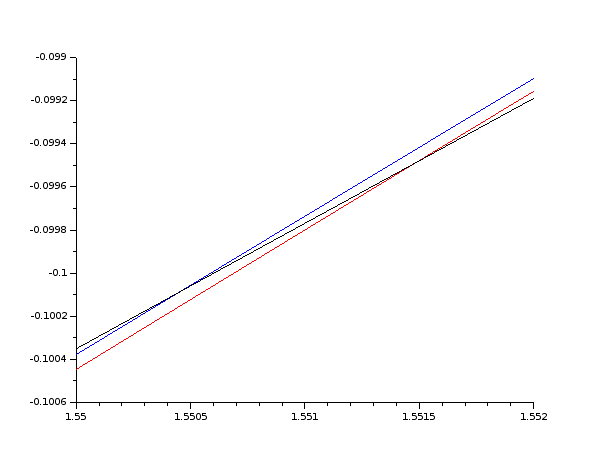}
	\end{center}
	\caption{Special power-law case  \eqref{case} in three dimensions.  Plot of
          $t\mapsto e_*(L)$ for $L=\mathsf{D}_3$ (red),
          $L=\mathsf{D}_3^*$ (blue) and $L=\Z^3$ (black) for $t\in
          (0,3)$. The graph on the right  is a close-up of the
            two transitions at $t_1$ and $t_2$.}
	\label{fig:s3d}
\end{figure}


\section*{Acknowledgments}
 MF  and US are   supported by  the  DFG-FWF 
 international joint  project FR 4083/3-1/I\,4354. MF is also
 supported by the Deutsche Forschungsgemeinschaft under Germany's Excellence Strategy EXC
  2044-390685587, Mathematics M\"unster:
  Dynamics--Geometry--Structure. 
 US and LB are supported by the FWF project F\,65. US is also supported
 by the FWF project  P\,32788.

\bibliographystyle{plain} %

\end{document}